	\title{On the size of good-for-games Rabin automata and its link with the memory in Muller games}
	\titlerunning{On the size of good-for-games Rabin automata and the memory in Muller games} 
	\author{Antonio Casares}{LaBRI, Université de Bordeaux, France}{antonio.casares-santos@labri.fr}{https://orcid.org/0000-0002-6539-2020}{}
	\author{Thomas Colcombet}{CNRS, IRIF, Université Paris Cité, France}{thomas.colcombet@irif.fr}{https://orcid.org/0000-0001-6529-6963}{ANR Delta and Duall}
	\author{Karoliina Lehtinen}{CNRS, Aix-Marseille Université, Université de Toulon, LIS, France}{lehtinen@lis-lab.fr}{https://orcid.org/0000-0003-1171-8790}{}
	\authorrunning{A. Casares, T. Colcombet and K. Lehtinen} 
	\keywords{Infinite duration games, Muller games, Rabin conditions, omega-regular languages, memory in games, good-for-games automata}
\tikzset{
	>=stealth',
	-={stealth',ultra thick,scale=3} 
	node distance=1cm, 
	every state/.style={thick}, 
	initial text=$ $, 
}
\let\ab\allowbreak
\mathchardef\hyphen=45 
\definecolor{Green2}{HTML}{3EA514}
\definecolor{Red2}{HTML}{FF0400}
\definecolor{Orange2}{HTML}{E6670A}
\definecolor{Violet2}{HTML}{CE1ff9}
\definecolor{Green3}{HTML}{45A229}
\definecolor{Navy}{HTML}{2943A2}
\newrobustcmd\powerset{\mathcal P}
\DeclareMathAlphabet{\mathpzc}{OT1}{pzc}{m}{it}
\newrobustcmd{\NN}{\mathbb{N} }
\newrobustcmd{\ZZ}{\mathbb{Z}}
\newrobustcmd{\QQ}{\mathbb{Q}}
\newrobustcmd{\RR}{\mathbb{R}}
\newrobustcmd{\CC}{\mathbb{C}}
\newrobustcmd{\WW}{\mathbb{W}}
\newrobustcmd{\I}{\mathcal{I}}
\newrobustcmd{\F}{\mathcal{F}}
\newrobustcmd{\G}{\mathcal{G}}
\renewcommand{\L}{\mathcal{L}}
\newrobustcmd{\M}{\mathcal{M}}
\newrobustcmd{\Q}{\mathcal{Q}}
\newrobustcmd{\C}{\mathcal{C}}
\newrobustcmd{\A}{\mathcal{A}}
\newrobustcmd{\B}{\mathcal{B}}
\newrobustcmd{\Z}{\mathcal{Z}}
\newrobustcmd{\R}{\mathcal{R}}
\newrobustcmd{\T}{\mathcal{T}}
\newrobustcmd{\W}{\mathcal{W}}
\renewcommand{\P}{\mathcal{P}}
\newrobustcmd\muller{\mathcal F}
\newrobustcmd\rabin{\mathcal R}
\newrobustcmd\Lang[1]{\kl[\Lang{#1}]{\mathcal L}_{#1}}
\newrobustcmd{\kk}{\kappa}
\newrobustcmd{\uu}{\upsilon}
\newrobustcmd\dd{\kl[\dd]{\delta}}
\knowledge\dd{notion}
\renewcommand{\ss}{\sigma}
\newrobustcmd{\rr}{\rho}
\renewcommand{\aa}{\alpha}
\newrobustcmd{\bb}{\beta}
\newrobustcmd{\oo}{\omega}
\newrobustcmd{\pp}{\varphi}
\renewcommand{\gg}{\gamma}
\newrobustcmd{\ee}{\varepsilon}
\renewcommand{\SS}{\Sigma}
\newrobustcmd{\GG}{\Gamma}
\newrobustcmd{\DD}{\Delta}
\knowledgerenewmathcommand\nu{\cmdkl{\LaTeXnu}}
\knowledgerenewmathcommand\eta{\cmdkl{\LaTeXeta}}
\knowledgerenewmathcommand\chi{\cmdkl{\LaTeXchi}}
\knowledgenewrobustcmd\ZF{\mathcal{\cmdkl{\Z}_{\cmdkl{\F}}}}
\newrobustcmd{\Pplus}{"\P_+"}
\knowledgenewrobustcmd\memtree{\cmdkl{\mathrm{mem{\hyphen}tree}}}
\knowledgenewrobustcmd\zielonkatree{\cmdkl{\mathcal{Z}}}
\knowledgenewrobustcmd\RF{\cmdkl{\R_\F}}
\newrobustcmd{\prefix}{\kl[\minf]{\sqsubseteq}}
\newrobustcmd{\nprefix}{\sqsubset}
\newrobustcmd\nodes{N}
\knowledgenewrobustcmd\ancestor{\mathbin{\cmdkl{\sqsubseteq}}}
\newrobustcmd\memgen{\kl[\memgen]{\mathrm{mem}}}
\knowledge\memgen{notion}
\newrobustcmd\memZF{\kl[\memZF]{\mathrm{mem}_{\ZF}}}
\knowledge\memZF{notion}
\newrobustcmd{\mout}{"\mathit{Out}"}
\knowledge\mout{notion}
\newrobustcmd{\mIn}{\mathit{In}}
\newrobustcmd{\minf}{\kl[\minf]{\mathit{Inf}}}
\knowledge{\minf}{notion}
\newrobustcmd{\mlast}{\mathit{Last}}
\newrobustcmd{\mocc}{\kl{\mathit{Occ}}}
\newrobustcmd{\macc}{\mathit{Acc}}
\newrobustcmd{\mcolours}{\mathit{Colours}}
\newrobustcmd{\mletters}{\kl[\mletters]{\mathit{Letters}}}
\knowledge\mletters{notion}
\newrobustcmd{\maxpr}{\kl[\maxpr]{\mathit{Max\hyphen Priority}}}
\knowledge\maxpr{notion}
\newrobustcmd{\maccept}{\kl[\maccept]{\mathit{Accept}}}
\knowledge\maccept{notion}
\newrobustcmd{\moutput}{\kl[\moutput]{\mathit{Output}}}
\knowledge\moutput{notion}
\newrobustcmd{\msource}{\kl[\msource]{\mathit{Source}}}
\knowledge\msource{notion}
\newrobustcmd{\mtarget}{\kl[\mtarget]{\mathit{Target}}}
\knowledge\mtarget{notion}
\newrobustcmd{\mstates}{\kl[\mstates]{\mathit{States}}}
\knowledge\mstates{notion}
\newrobustcmd{\mnodes}{\kl[\mnodes]{\mathit{Nodes}}}
\knowledge\mnodes{notion}
\newrobustcmd{\mroundnodes}{\kl[\mroundnodes]{N_\bigcirc}}
\newrobustcmd{\msquarenodes}{\kl[\msquarenodes]{N_\Box}}
\newrobustcmd{\mleaves}{\kl[\mleaves]{\mathit{Leaves}}}
\knowledge\mleaves{notion}
\newrobustcmd{\msupport}{\kl[\msupport]{\mathit{Support}}}
\knowledge\msupport{notion}
\newrobustcmd{\mchildren}{\kl[\mchildren]{\mathit{Children}}}
\knowledge\mchildren{notion}
\newrobustcmd{\mnextn}{\kl[\mnextn]{\mathit{Next}_n}}
\knowledge\mnextn{notion}
\newrobustcmd{\mjump}{\kl[\mjump]{\mathit{Jump}}}
\knowledge\mjump{notion}
\newrobustcmd{\cmzielonka}{\mathtt{ChromaticMemory \hyphen ZT}}
\newrobustcmd{\memoryzielonka}{\mathtt{Memory \hyphen ZT}}
\newrobustcmd{\tinput}{\mathtt{Input}}
\newrobustcmd{\toutput}{\mathtt{Output}}
\newrobustcmd{\tproperty}{\mathtt{Property}}
\newrobustcmd{\Ptime}{\mathtt{P}}
\newrobustcmd{\NP}{\mathtt{NP}}
\newrobustcmd{\NPc}{\mathtt{NP}\hyphen\text{complete}}
\newrobustcmd{\CN}{\mathtt{Chromatic} \hyphen \mathtt{Number}}
\newrobustcmd{\nextmove}{\mathtt{next} \hyphen \mathtt{move}}
\begin{document}

	\maketitle
	\begin{abstract}
		In this paper, we look at good-for-games Rabin automata that recognise a Muller language (a language that is entirely characterised by the set of letters that appear infinitely often in each word). We establish that minimal such automata are exactly of the same size as the minimal memory required for winning Muller games that have this language as their winning condition. We show how to effectively construct such minimal automata. Finally, we establish that these automata can be exponentially more succinct than equivalent deterministic ones, thus proving as a consequence that chromatic memory for winning a Muller game can be exponentially larger than unconstrained memory.
	\end{abstract}


\vspace{4mm}
	
	\section{Introduction}
	\label{section: Intro}

\textbf{Games.}
Games, as considered in this work, are played by two antagonistic players, called the existential and universal players, who move a token around finite edge-coloured directed graphs. When the token lands on a position belonging to one of the players, this player moves it along an outgoing edge onto a new position. At the end of the day, the players have constructed an infinite path, called a play, and the winner is determined based on some language~$\WW$ of winning infinite sequences of colours, called the "winning condition" (we call $\WW$-games the games which use the winning condition~$\WW$). Solving such games consists of deciding whether the existential player has a winning strategy, i.e. a way to guarantee, whatever the moves of the opponent are, that the play will end up in the winning condition.
Solving infinite duration games is at the crux of many algorithms used in verification, synthesis, and automata theory~\cite{Buchi77Games,Gurevich1982trees,PR89Synthesis, LMS20SynthesisLTL}. Difficulties in solving them are both theoretical and practical, and many questions pertaining to game resolution still remain unanswered. 

\textbf{Memory.}
Several parameters are relevant for solving a game: its size, of course, but also its "winning condition" and the complexity of winning "strategies". A measure of this complexity is the "memory" used by a strategy. The simplest strategies are those that use no memory ("positional strategies"): decisions depend exclusively on the current position, and not on the past of the game. A strategy uses a finite amount of memory if the information that we need to retain from the past can be summarized by a finite state machine that processes the sequence of moves played in the game. In this case, the amount of "memory" used by the strategy is the number of states of this machine. Given a "winning condition" $\WW$, a fundamental question is what is the minimal quantity $m$ such that if the existential player wins a $\WW$-game, there is a winning strategy using a memory of size $m$ (we call $m$ the "memory requirements" of $\WW$).
In addition to its size, a memory also has structure, which further elucidates the game dynamics. 
Understanding both the size and the structure of "memories" for $\WW$ is a crucial step to design algorithms for solving $\WW$-games.

\begin{quote}
	Question A: Give a structural description of the optimal memory in $\WW$-games.
\end{quote}

\textbf{Muller conditions.}
While there is a large zoo of winning conditions in the literature, here we are interested in $\omega$-regular ones (described by finite state automata over infinite words), and, in particular, so called "Muller conditions", for which the winner depends only on the colours that are seen infinitely often in the play. "Memory requirements" for "Muller conditions" have been studied in depth by Dziembowski, Jurdziński and Walukiewicz \cite{DJW1997memory}.
They provide a ``formula'' for computing the size of the minimal memory sufficient for winning in all games with a given  Muller winning condition, based on the "Zielonka tree"~\cite{Zielonka1998infinite}, which describes the structure of a "Muller condition".
The "Zielonka tree" has also been used to characterise the memory requirements of "Muller conditions" when randomised strategies are allowed~\cite{Horn09RandomFruits} and to provide minimal "parity automata" recognising a "Muller condition"~\cite{CCF21Optimal}. This fundamental structure is also at the heart of our contribution.

\textbf{Game reductions and good-for-gameness.} 
When confronted with a "$\WW$-game", a standard solution is to reduce it to a game with a larger underlying graph, but a simpler winning condition.
The typical way to do this  (but not the only one) is to perform the composition of the game with a suitable automaton with another acceptance condition~$\WW'$ that accepts the language~$\WW$.
The result is an "$\WW'$-game" which has as size the product of the size of the original game and the size of the automaton.
There is a subtlety here: not all automata can be used for this operation.
For a non-deterministic automaton, this is in general incorrect, while using a deterministic automaton is always correct.
So here, finding a minimal deterministic automaton for a given language improves the complexity of game resolution, and there is a large body of research in this direction (see \cite{CCF21Optimal,Kretinski2017IAR,Loding1999Optimal} for Muller conditions, \cite{AbuRadiKupferman19Minimizing,Casares2021Chromatic, Schewe10MinimisingNPComplete,Schewe20MinimisingGFG} for minimisation of automata, and
\cite{McNaughton1966Testing,Safra1988onthecomplexity,MullerSchupp95NewResults,Piterman2006fromNDBuchi,Schewe2009tighter,Michel1988Complementation,Loding1999Optimal,Colcombetz2009tight,LodingP19} for determinisation). 
However, some non-deterministic automata can also be used to perform this reduction. These are called "good-for-games automata" (GFG) \cite{HP06,Colcombet2009CostFunctions}. 

Some languages are known to be recognised by "good-for-games automata" that are exponentially more succinct than any equivalent deterministic automaton~\cite{KS15}, and several lines of research concerning good-for-games automata are under study (how to decide `good-for-gameness'~\cite{BK18,KS15,BKLS20,BL22}, how expressive is `good-for-gameness' for pushdown automata~\cite{LZ20,GJLZ21} what are good-for-games quantitative automata~\cite{BL21a}, etc).
However, one key question that has not yet been addressed concerning "good-for-games automata" is how to design techniques as general as possible for building them. To the best of our knowledge, the only existing result in this direction is a polynomial-time algorithm to minimise co-Büchi "GFG" automata~\cite{AbuRadiKupferman19Minimizing}.

\begin{quote}
	Question B: Provide general tools for constructing good-for-games automata.
\end{quote}


In this paper, in the context of "Muller conditions", we  relate these two lines of study, and in particular give partial answers to the general questions A and B. Indeed, we show that the "memory" needed to win in $L$-games for a "Muller language" $L$ coincides with the size of minimal "GFG Rabin automata" for $L$, and, in this sense, we give a structural description of the memory for "Muller games", thus giving a refined answer to question A in this case. We also provide an optimal way to construct these  minimal "good-for-games automata", thus answering question B in the context of "Muller conditions".

\paragraph*{Contributions.}
\begin{enumerate}
\item\label{item:contrib-GFG-memory} We show that for all $\omega$-regular languages~$L$, the "size@automata" (number of states) of a "good-for-games Rabin-automaton" for~$L$ is an upper bound on the memory
   that the existential player needs to implement winning strategies for "$L$-games".
   This inequality is straightforward, but had not been stated explicitly prior to this work.
\item We establish that when $L$ is a "Muller language", the following two quantities are equal: the least size of a "good-for-games Rabin-automaton" for~$L$
	and the least memory required for the existential player in all "$L$-games" in which she wins. Furthermore, we provide an efficient way to construct such a minimal automaton
	from the "Zielonka tree" of the condition \cite{Zielonka1998infinite}.
	This automaton can be seen, in a certain way, as a quotient of the minimal deterministic parity automaton for this language, as described in \cite{CCF21Optimal}. 

	Let us note that the least amount of memory needed to win a "Muller game" was  described precisely by Dziembowski,  Jurdziński and Walukiewicz \cite{DJW1997memory}. We show here that the optimal strategy described in  \cite{DJW1997memory} can be implemented in a "good-for-games Rabin-automaton". In combination with
	\Cref{item:contrib-GFG-memory}, this provides another proof of the upper bound in \cite{DJW1997memory}.
	\item Finally, we provide a family of "Muller languages" such that the smallest "GFG Rabin automata" recognising it are of linear size in the number of letters, while equivalent "deterministic Rabin automata"  grow exponentially. 
	Note that the least size of a "deterministic Rabin automaton" for a "Muller language" $L$ is known to coincide with the "chromatic memory" needed for winning "$L$-games"~\cite{Casares2021Chromatic} (i.e. a memory that is updated based only on the letters seen, independently of the position in the game).
	The question of equivalence between "chromatic memory" and "memory" was asked by Kopcyński \cite{Kopczynski2006Half,Kopczynski2008PhD}, and an arbitrary difference between these two notions was established only recently by Casares \cite{Casares2021Chromatic}. Our new result, which is incomparable, shows that the "chromatic memory" can grow exponentially in the size of the alphabet, even when the general "memory" remains linear.
\end{enumerate}
Together these three points develop techniques to solve Muller games in an optimal way by means of "good-for-games" "Rabin automata" reductions. 
The last point shows that an exponential gain can be achieved compared to using classical deterministic "Rabin automata".
Overall, our contribution supplements our understanding of "Muller languages" and  highlights the---so far unexplored---fundamental role of "GFG automata" in the equation. Indeed, up to now "GFG automata" had mainly been studied for their succinctness, expressivity or algorithmic properties. Here, we shed light on a novel dimension of this automata class. \\

\textbf{Related work.} There is vast amount of literature on the memory requirements of different games. The first results in this direction where the proofs of the "positionality" of "parity conditions" and "half-positionality" of "Rabin conditions"~\cite{EmersonJutla91Determinacy, Klarlund94Determinacy} and the finite-memory determinacy of "Muller games"~\cite{Gurevich1982trees}. The exact memory requirements of "Muller conditions" where characterised in~\cite{DJW1997memory}. In his PhD Thesis~\cite{Kopczynski2006Half, Kopczynski2008PhD}, Kopczyński characterises several classes of conditions that are "half-positional",  introduces the concept of "chromatic memories" (memories that are updated based only on colours seen) and provides an algorithm to decide the chromatic memory requirements of a winning condition. Conditions that are positional for both players over all graphs where characterised in~\cite{ColcombetN2006PositionalEdge} and those that are positional over finite graphs in~\cite{GimbertZielonka2005Memory}. More recently, these two results have been generalized to finite-memory conditions~\cite{BRORV20FiniteMemory, BRV22OmegaRegMemory}. The "memory requirements" have been proved to be different to the chromatic memory requirements in general~\cite{Casares2021Chromatic}, but conditions that are finite-memory determined are also chromatic-finite-memory determined~\cite{Kozachinskiy22Chromatic}.\\

\textbf{Structure of this document.}
In \Cref{section: Definitions}, we describe the classical definitions related to our work such as games, automata and good-for-gamesness.
In \Cref{section: GFGRabin-Memory}, we show why good-for-games Rabin automaton can be used as a memory structure for the existential player, the optimality of the construction for Muller conditions, and how to construct the least such automaton.
In \Cref{section:succinctness}, we establish that this construction can be exponentially more succinct than deterministic Rabin automata.
\Cref{section:conclusion} concludes the paper.

	\section{Definitions}
	\label{section: Definitions}


\textbf{Notations.} 
	\AP  $|A|$ denotes the cardinality of a set $A$, $\P(A)$ its power set and $\intro*\Pplus(A)=\P(A)\setminus\{\emptyset\}$.
	\AP For a finite non-empty alphabet $\SS$, we write $\SS^*$ and $\SS^\oo$ for the sets of finite and ""infinite words"" over $\SS$, respectively.
	The empty word is denoted by $\ee$.
	Given $w=w_0w_1w_2\dots\in\SS^\oo$, we denote $\intro*\minf(w)\subseteq\SS$ the set of letters that appear infinitely often in~$w$. We let $w[i..j]$ be the finite word $w_iw_{i+1}...w_j$ if $i\leq j$, and $\ee$ if $j<i$. 
	
	We extend maps $\ss: A \to B$ to $A^*$ and $A^\oo$ component-wise and we denote these extensions by $\ss$ whenever no confusion arises.
	\AP For a positive rational number $q\in\QQ$ we denote by $""\lfloor q \rfloor""$ the greatest integer $n\in \NN$ such that $n\leq q$.
	
\subsection{Games and their memory}\label{subsection:games}

 \AP  \textbf{Games.} We consider turn-based infinite duration games played between the existential and the universal player (referred to as  ""Exist"" and ""Univ"") over a directed graph. Formally, a $\GG$-coloured ""game"" is a tuple $\G=(V=V_E\uplus V_A, E, x_0, \WW)$, which consists of a set of vertices $V$ partitioned into Exist's positions $V_E$ and Univ's ones, $V_A$; a set of ""transitions"" (also called edges or moves) $E\subseteq V \times (\GG\cup\{\ee\}) \times V$; an initial vertex $x_0\in V$ and a subset $\WW\subseteq \GG^\oo$ of winning sequences. 
 We make the assumptions that there is at least one move from every position and that no cycle is labelled exclusively by $\ee$.
 We will denote by $\gg: E \to \GG\cup \{\ee\}$ the function that assigns to each edge its colour.
 \AP We write $\intro*\mout(x)$ for the set of outgoing moves from $x$, that is, $\mout(x)=\{e\in E \: : \: e=(x,c,x') \text{ for some } x'\in V, c\in \GG\cup\{\ee\}\}$. If $\GG$ is a game using the winning condition $\WW$ we call it a ""$\WW$-game"".

\AP Each player moves a pebble along an outgoing edge whenever it lands on a position belonging to that player, forming an infinite path $\pi\in E^\oo$ starting in $x_0$ called a ""play"".

We denote $\gg(\pi)$ sequence of colours labelling $\pi$ omitting the $\ee$ labels (we remark that $\gg(\pi)\in \GG^\oo$, since there are no cycles entirely labelled by $\ee$).
The play is ""winning@@play"" for the "existential player" if $\gg(\pi)\in \WW$. 
A ""partial play"" is a finite path $\pi\in E^*$ in $\G$ starting in $x_0$.
A ""strategy for the existential player"" is a function $\ss\colon E^*\to E$ such that if a "partial play" $\pi$ ends in a position $x\in V_E$, then $\ss(\pi)\in \mout(x)$. 
We say that a "play" $\pi$ is ""consistent with"" the strategy $\ss$ if for every "partial play" $\pi'$ that is a prefix of $\pi$ ending in a position controlled by "Exist", the next edge in $\pi$ is $\ss(\pi')$.
The "strategy@@existential" $\ss$ is ""winning@@strategy"" if every "play" "consistent with" $\ss$ is "winning@@play" for the existential player.
We say that the game $\G$ is ""won by"" "the existential player" if
that player has a "winning strategy" in $\G$.
\AP A strategy is ""positional"" if it can be represented by a function $\ss\colon V_E \to E$ (that is, the choice of the next transition only depends on the current position, and not on the history of the path). \\

\noindent\textbf{Winning conditions.}
	We fix an alphabet~$\Gamma$.
	\begin{description}
		\itemAP[Muller.] A  ""Muller condition"" over the alphabet~$\Gamma$ is given by a family $\muller\subseteq \Pplus(\GG)$. A word $w\in \Gamma^\oo$ ""satisfies the Muller condition""~$\muller$ if $\minf(w)\in\muller$.
		\AP The ""language of the Muller condition"" $\Lang\muller\subseteq\Gamma^\omega$ contains the "$\omega$-words" that "satisfy@@muller"~$\muller$.

		\itemAP[Rabin.] A  ""Rabin condition"" over the alphabet~$\Gamma$ is represented by a family of ""Rabin pairs"" $R=\ab\{(G_1,R_1),\dots,\ab(G_r,R_r)\}$, where $G_i,R_i\subseteq \Gamma$ and $G_i\cap R_i= \emptyset$. 
	The "Rabin pair"~$j$ is said to be ""green in~$c$"" if $c\in G_j$, to be ""red in~$c$"" if~$c\in R_j$, or to be ""orange in~$c$""  if none of the previous occur.
		 A word $w\in \Gamma^\omega$ ""satisfies the Rabin condition""~$R$ if $\minf(w) \cap G_j \neq \emptyset$ and $ \minf(w) \cap R_j = \emptyset $ for some index $j\in \{1,\dots,r\}$. Said differently, there is a "Rabin pair"~$j$ which is "red"  in finitely many letters from~$w$, and "green" for infinitely many letters of~$w$.
			\AP The ""language of the Rabin condition"" $\Lang\rabin\subseteq\Gamma^\omega$ contains the "$\omega$-words" that "satisfy@@rabin"~$\rabin$.
		
	\itemAP[Parity.] To define a ""parity condition"" we suppose that $\GG\subseteq \NN$. A word $w\in \GG^\oo$  ""satisfies the parity condition"" if the maximum in $\minf(w)$ is even.
	\AP The ""language of the parity condition"" contains the "$\omega$-words" that "satisfy@@parity" it.
	\end{description}
	
	We say that a language $L\subseteq \GG^\oo$ is a ""Muller language"" if it is the "language of some Muller condition"~$\muller$. Equivalently, $L$ is a Muller language if it can be described as a boolean combination of atomic propositions of the form ``the letter `$a$' appears infinitely often'' and their negations.
	Note that  "languages of Rabin conditions" are "languages of Muller conditions", and that "languages of parity conditions" are "languages of Rabin conditions", but the converses do not hold.
	\AP Given a Muller condition $\F\subseteq \Pplus(\GG)$ and a subset $C\subseteq \GG$, we define the ""restriction of $\F$ to $C$"" as the Muller condition over $C$ given by $\F|_C=\{A\subseteq C \: : \: A\in \F\}$.\\

\noindent\textbf{Memory structures.}
	\AP A ""memory structure for the game $\G$"" of "moves"~$E$ is a tuple $\M=(M, m_0, \mu,\sigma)$ where
	   $M$ is a \emph{set of memory states},
	   $m_0\in M$ is an \emph{initial memory state},
	   $\mu: M \times E \to M$ is an ""update function"", and
	   $\sigma: M\times V_E \to E$ maps each position~$x$ owned by the "existential player" to a "move from~$x$". 
	The ""size@@memory"" of~$\M$ is the cardinal of $M$.
	We extend the function $\mu$ to paths by induction: $\mu(m,\varepsilon)=m$,
	and~$\mu(m, \pi e)=\mu(\mu(m,\pi),e)$ for a path $\pi \cdot e \in E^*$. The memory structure $\M$ ""induces a strategy"" $\ss_\M\colon E^*\to V_E$ given by $\ss_\M(\pi)=\sigma(\mu(m_0,\pi), \mlast(\pi))$, where $\mlast(\pi)$ is the last position of the "partial play" $\pi$.
	
	\AP We say that $\M$ is a \AP""chromatic memory structure"" if there is a function $\mu_c\colon M\times (\GG\cup\{\ee\}) \to M$ such that $\mu_c(m,\ee)=m$ for every $m\in M$ and $\mu(m,e) = \mu_c(m,\gg(e))$ for all edges $e\in E$. 

	 \AP The ""memory requirements"" of a winning condition $\WW$ are defined as the least integer $n$ such that if $\G$ is an "$\WW$-game" "won by the existential player", then she has a "winning strategy" "given by a memory" of size at most $n$. We denote this quantity by $\intro*\memgen(\WW)$.

	 \AP We say that a "winning condition" $\WW$ is ""exist-positional"" (also called \emph{half-positional})
	 if $\memgen(\WW)=1$. Equivalently, $\WW$ is exist-positional if "Exist" has a winning positional strategy whenever "Exist" has a winning strategy at all.

\subsection{Automata and good-for-gameness}\label{subsection:automata}

	\noindent\textbf{Automata.} A ""non-deterministic automaton"" (or simply an \reintro{automaton}) $\A=(Q, \Sigma, Q_0, \DD, \GG, \WW)$ consists of a finite set of ""states"" $Q$, an input alphabet $\Sigma$, a non-empty set of ""initial states"" $Q_0\subseteq Q$, a ""transition relation"" $\Delta \subseteq Q\times \Sigma \times \GG \times Q$ and an ""acceptance condition"" $\WW\subseteq \GG^\omega$.
	We will write $\intro*\dd\colon Q \times \SS \to \P(Q)$ for the function $\dd(q,a)=\{q'\in Q \: : \: (q,a,c,q')\in \DD \text{ for some } c\in \GG\}$.
	The ""size@@automaton"" of the "automaton", $|\A|$, is the number of its states.
	\AP A ""run of the automaton"" over a word~$a_0a_1a_2\dots\in\Sigma^\omega$ is a sequence of transitions of the form:
	\begin{align*}
		\rho = (q_0,a_0,c_0,q_1)(q_1,a_1,c_1,q_2)\dots \in\Delta^\omega, \text{ such that } q_0\in Q_0 \text{ is an "initial state".}
	\end{align*}
	
	\AP A run is ""accepting@@run"" if ~$c_0c_1c_2\dots\in\WW$.
	If an "accepting run" over a word~$w\in \SS^\oo$ exists, the "automaton" ""accepts@@automaton""~$w$.
	The set of accepted words is the ""language accepted by the automaton"", written $L(\A)$.
	\AP The "automaton" is ""deterministic"" if $Q_0$ is a singleton and $\Delta$ is such that for all states~$q$ and letter~$a\in\Sigma$, there exists exactly one transition of the form $(q,a,c,q')$.
	In this case, for all words~$w\in\Sigma^\omega$ there exists one and exactly one "run" of the automaton over~$w$.
	
	An automaton using an "acceptance condition" $\WW$ (resp. an acceptance condition of type $X\in \{\text{"Muller", "Rabin", "parity"}\}$) is called an ""$\WW$-automaton"" (resp. $X$-automaton).\\
	

	\noindent\textbf{Good-for-gameness.} The automaton $\A$ is ""good-for-games"" (GFG) if there is a \textit{""resolver""} for it, consisting of a choice of an initial state $r_0\in Q_0$ and a function $r:\Sigma^*\times \Sigma \rightarrow \Delta$ such that for all words $w\in L(\A)$, the run $t_0t_1...\in \DD^\oo$, called the ""run induced by"" $r$ and defined by $t_i=r(w[0..i-1],w[i])$, starts in $r_0$ and is an "accepting run" over $w$. In other words, $r$ should be able to construct an accepting run in $\A$ letter-by-letter with only the knowledge of the word so far, for all words in $L(\A)$.

\subsection{The Zielonka tree of a Muller condition}\label{subsection:Zielonka}

	\AP A ""tree""~$T=(N,\ancestor)$ is a nonempty finite set of ""nodes""~$\nodes$ equipped with an order relation $\ancestor$ called the ""ancestor relation"" ($x$ is an ancestor of $y$ if $x\ancestor y$), such that
		(1) there is a minimal "node" for~$\ancestor$, called ""the root"", and
		(2) the "ancestors" of an element are totally ordered by~$\ancestor$.
	\AP The converse relation is the ""descendant"" relation.
	\AP Maximal nodes are called ""leaves"", and the set of leaves of $T$ is denoted by $\intro*\mleaves(T)$. 
	\AP Given a "node"~$n$ of a tree~$T$, the ""subtree of~$T$ rooted at~$n$"" is the tree~$T$ restricted to the nodes that have~$n$ as ancestor.
	\AP A node $n'$ is a ""child"" of $n$ if it is a minimal strict "descendant" of it. The set of children of $n$ is written $\intro*\mchildren_T(n)$.
	\AP The ""height"" of a tree $T$ is the maximal length of a chain for the "ancestor" relation. 
	\AP An ""$A$-labelled tree"" is a tree $T$ together with a labelling function $\nu \colon N \to A$.
	\begin{definition}[\cite{Zielonka1998infinite}]
		Let~$\muller\subseteq\Pplus(\GG)$ be a "Muller condition". A ""Zielonka tree"" for~$\muller$, denoted $\intro*\zielonkatree_\muller = (N, \ancestor, \intro*\nu :N \to \Pplus(\GG))$ is a "$\Pplus(\GG)$-labelled tree" with nodes 
		 partitioned into ""round nodes"" and ""square nodes"", $N= \mroundnodes \uplus \msquarenodes$ such that:
		\begin{itemize}
		\item The "root" is labelled $\GG$.
		\item If a "node" is labelled~$X\subseteq \GG$, with~$X\in\muller$, then it is a "round node", and its children are labelled exactly with the maximal subsets~$Y\subseteq X$ such that~$Y\not\in\muller$.
		\item If a "node" is labelled~$X\subseteq \GG$, with~$X\not\in\muller$, then it is a "square node", and its children are labelled exactly with the maximal subsets~$Y\subseteq X$ such that~$Y\in\muller$.
		\end{itemize}
	\end{definition}

	We remark that if $n$ is a node of $\zielonkatree_\muller$, then the "subtree of~$\zielonkatree_\muller$ rooted at~$n$" is a Zielonka tree for $"\F|_{\nu(n)}"$, (the "restriction of $\F$ to" the label of $n$).
	
	
	We equip trees with an order in order to navigate in them. An \AP""ordered Zielonka tree"" is a "Zielonka tree" for which the set of children of each node is equipped with a total order(``from left to right''). The function $\intro*\mnextn$ maps each child of~$n$ to its successor for this order, in a cyclic way.
	For each node $n\in N$ and each "leaf" $l$ below $n$ we define the set $\intro*\mjump_n(l)$ containing a leaf $l'$ if there are two children~$n_1, n_2$ of~$n$ such that 
	$n_1\ancestor l$, $n_2\ancestor l'$ and $n_2=\mnextn(n_1)$ (we remark that $n_1=n_2$ if $n$ has only one child). For $n=l$ we define $\mjump_n(l)=\{l\}$. That is, $l'\in \mjump_n(l)$ if we can reach $l'$ by the following procedure: we start at $l$, we go up the tree until finding the node $n$, we change to the next branch below $n$ (in a cyclic way) and we re-descend to $l'$.	
	From now on, we will suppose that all "Zielonka trees" are "ordered", without explicitly mentionning it.

\begin{example}\label{example: Zielonka-tree}
	We will use the following "Muller condition" as a running example throughout the paper.
	Let $\GG= \{a,b,c\}$ and let $\F$ be the Muller condition defined by:
	\[ \F = \{\{a,b\}, \{a,c\}, \{b\}\}. \]
	In Figure~\ref{Fig: ZielonkaTree-example} we show the "Zielonka tree" for $\F$.	
	We use Greek letters to name the nodes of the tree, $N$. 
	We have that $\mjump_\aa(\delta)=\{\ee, \zeta\}$ and $\mjump_\gg(\zeta)=\{\ee\}$.
	The numbering of the branches will be used in Section~\ref{subsection:GFGforMuller}.
	
	\begin{figure}[ht]
		\centering
		\begin{tikzpicture}[square/.style={regular polygon,regular polygon sides=4}, align=center,node distance=2cm,inner sep=3pt]
		
		\node at (0,2.4) [draw, rectangle, minimum height=0.8cm, minimum width = 1.3cm] (R) {$a,b,c$};
		
		\node at (-1,1.2) [draw, ellipse,minimum height=0.8cm, minimum width = 1.3cm, scale=1] (0) {$a,b$};
		\node at (1,1.2) [draw, ellipse, minimum height=0.8cm, minimum width = 1.3cm, scale=1] (1) {$a,c$};
		
		\node at (-1,0) [draw, square,minimum width=0.9cm, scale=1] (00) {$a$};
		\node at (0.5,0) [draw, square,minimum width=0.9cm, scale=1] (10) {$a$};
		\node at (1.5,0) [draw, square,minimum width=0.9cm, scale=1] (11) {$c$};
		
		\node at (0.9,2.4)  (alpha) {$\textcolor{Violet2}{  \aa }$};
		\node at (-1.9,1.2)  (beta) {$\textcolor{Violet2}{ \bb }$};
		\node at (1.9,1.2)  (gamma) {$\textcolor{Violet2}{ \gg } $};
		\node at (-1.6,0)  (delta) {$\textcolor{Violet2}{ \delta }$};
		\node at (0,0)  (eps) {$\textcolor{Violet2}{ \ee } $};
		\node at (2,0)  (zeta) {$\textcolor{Violet2}{ \zeta }$};
		
		\node at (-1,-0.6)  (B1) {\textcolor{Navy}{  \textbf{1} }};
		\node at (0.5,-0.6) (B2) {\textcolor{Green3}{  \textbf{1} }};
		\node at (1.5,-0.6) (B3) {\textcolor{black}{  \textbf{2} }};
		
		\draw   
		(R) edge (0)
		(R) edge (1)
		
		(0) edge (00)
		
		(1) edge (10)
		(1) edge (11);
		
		\end{tikzpicture}
		\caption{Zielonka tree $\zielonkatree_\F$ for 
			$\F=\{ \{a,b\}, \{a,c\}, \{b\}\}$.}
		\label{Fig: ZielonkaTree-example}
	\end{figure}
\end{example}

	
\begin{definition}[\cite{DJW1997memory}]
	Let $T$ be a tree with "nodes" partitioned into "round" and "square nodes", its ""memory for the existential player@memory@zielonka"" (\reintro(zielonka){memory} for short), denoted $\intro*\memtree(T)$, is defined inductively as:
	\begin{itemize}
	\item $1$ if $T$ has exactly one "node".
	\item The sum of the "memories@@zielonka" of the "subtrees of~$T$ rooted" at the "children" of the "root", if the "root" is "round".
	\item The maximum of the "memories@@zielonka" of the "subtrees of~$T$ rooted" at the "children" of the "root", if the "root" is "square".
	\end{itemize}
%
\end{definition}

For instance, for the "Zielonka tree" from Example~\ref{example: Zielonka-tree}, $\memtree(\zielonkatree_\F)=2$.\\

The key result justifying the introduction of this notion is that it characterises precisely the quantity of memory required for winning an~"$\muller$-game", as shown by the next proposition.
\begin{proposition}[\cite{DJW1997memory}]\label{prop: optimalMemoryDJW}
	For all "Muller conditions" $\muller$, $\memgen(\muller) = \memtree(\zielonkatree_\muller)$.
\end{proposition}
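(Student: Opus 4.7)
I would prove the two inequalities $\memgen(\F) \leq \memtree(\zielonkatree_\F)$ and $\memgen(\F) \geq \memtree(\zielonkatree_\F)$ separately, both by induction on the height of the ordered Zielonka tree $\zielonkatree_\F$. The base case is a single-node tree, where $\F$ is either $\Pplus(\GG)$ or $\emptyset$; in both cases Eve only needs to survive, and positional strategies (of memory~$1$) clearly suffice, which matches $\memtree = 1$.

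\textbf{Upper bound.} I would exhibit an explicit memory structure $\M$ of size $\memtree(\zielonkatree_\F)$ that works uniformly in every $\F$-game won by Eve. The states of $\M$ can be described recursively on the tree: a leaf contributes a single state; at a square node (adversarial), the memory is the ``disjoint union over children'' collapsed so that only $\max_i \memtree(T_{n_i})$ states are needed, since Adam ultimately commits to one child subtree; at a round node, one takes the genuine disjoint union $\sum_i \memtree(T_{n_i})$, since Eve must cycle through her options using $\mnextn$ and remember which one she is currently trying. The update function implements exactly the ``climb until the current colour fits, then jump to the next child and descend'' behaviour already formalised through $\mjump_n$. Given a memory state, which points at a current leaf $l$, the move function $\sigma$ plays according to the winning strategy (given by induction) of the subgame restricted to the branch of $l$. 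Correctness follows from the standard observation that for any infinite play $\pi$, there is a deepest node $n^\star$ of the tree whose label is exactly $\minf(\gg(\pi))$, and that $n^\star$ is visited infinitely often; the cyclic use of $\mnextn$ at round ancestors of $n^\star$ guarantees that $n^\star$ cannot be square (otherwise Eve could switch, contradicting maximality), so $\minf(\gg(\pi)) \in \F$.

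\textbf{Lower bound.} I would build, by induction on $\zielonkatree_\F$, a specific ``canonical'' $\F$-game $\G_\F$ won by Eve on which no smaller memory suffices. Roughly, $\G_\F$ has one vertex per node of the tree: round nodes belong to Eve, square nodes to Adam, each chooses a child to descend to, and leaves loop back to some ancestor while emitting the colours in the ancestor's label. For a round root with children $n_1,\dots,n_k$, I would glue in parallel the inductively constructed $\G_{\F|_{\nu(n_i)}}$ and force Eve to visit each of them infinitely often (using extra gadgets that emit a ``witness'' colour outside $\nu(n_i)$); a memory argument then shows that Eve's memory must decompose into $k$ disjoint regions, one per child, hence size $\sum_i \memtree(T_{n_i})$. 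For a square root, Adam selects a single child and the subgame collapses to $\G_{\F|_{\nu(n_i)}}$, so the lower bound is $\max_i \memtree(T_{n_i})$.

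\textbf{Main obstacle.} The delicate part is the lower bound: the gadget enforcing that Eve actually uses distinct memory classes in the $k$ parallel copies at a round node must be carefully designed so that (i) Eve does win the game, thereby forcing her strategy to exist, and (ii) merging any two memory states in different copies lets Adam exhibit a losing play. This is precisely where the asymmetry between the sum at round nodes and the max at square nodes is pinned down, and where the inductive hypothesis must be transported through the composition of games. Once this construction is in place, the matching upper bound strategy above shows that the bound is tight, completing the proof.
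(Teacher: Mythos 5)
You are attempting to prove a statement that the paper itself does not prove: Proposition~\ref{prop: optimalMemoryDJW} is imported from \cite{DJW1997memory}, and the paper only re-derives the upper bound indirectly, via Proposition~\ref{prop: Construction_GFGRabin} together with Corollary~\ref{corollary:automata>=memory}. Your plan is essentially the original Dziembowski--Jurdzi\'nski--Walukiewicz route, which is legitimate in principle, but your upper-bound construction has a genuine gap. After collapsing leaves under square nodes so that only $\memtree(\zielonkatree_\F)$ memory states remain, the climb-then-jump-then-descend update is a function on \emph{leaves} that does not factor through the collapse: two leaves assigned the same memory state may be sent, on the same colour, to leaves with different memory states. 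Concretely, in Example~\ref{example: Zielonka-tree} with the labelling of Property~\ref{Eq_property-star} identifying $\delta$ and $\ee$ (both mapped to state $1$), reading $c$ from $\delta$ leads via $\mjump_\aa$ to $\ee$, i.e.\ stays in state $1$, while reading $c$ from $\ee$ leads via $\mjump_\gg$ to $\zeta$, i.e.\ moves to state $2$. So a size-$\memtree(\zielonkatree_\F)$ memory whose update is driven by the colour alone is not well defined; worse, no fix that keeps the update colour-driven can work in general, since such a memory would be chromatic, and by \cite{Casares2021Chromatic} together with Theorem~\ref{Th: GFGRabin-Exp-Succinct} the chromatic memory (equivalently, deterministic Rabin size) can be exponentially larger than $\memtree(\zielonkatree_\F)$. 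The ambiguity must be resolved using the game itself: either by DJW's induction over arenas with attractor decompositions, or, as this paper does, by viewing the collapsed object as a nondeterministic good-for-games Rabin automaton (Proposition~\ref{prop: Construction_GFGRabin}), forming the product game, invoking exist-positionality of Rabin conditions, and projecting the positional strategy back, which yields an update $\mu(q,e)$ depending on the game edge $e$ and not only on its colour (Lemma~\ref{lemma: GFGinducesMemory}).

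Two further remarks. First, your correctness sketch asserts that there is a node whose label is exactly $\minf(\gg(\pi))$; this need not hold, and the standard argument instead considers the minimal (for $\ancestor$) node visited infinitely often by the induced walk on the tree, showing $\minf(\gg(\pi))$ is contained in its label but in no child label, as in Lemma~\ref{lemma: acc-sequences-nodes-ZT} and Lemma~\ref{lemma: Rabin recognises L_F and GFG}. Second, your lower bound follows DJW's strategy of building canonical games from the Zielonka tree, with sum at round nodes and max at square nodes; this is the right idea, but as you acknowledge, the gadget forcing Exist to use disjoint memory regions across the children of a round node is exactly the hard part, and it is left unconstructed, so that half remains a sketch rather than a proof.
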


	\section{GFG Rabin automata correspond to memory structures for Muller games}
	\label{section: GFGRabin-Memory}

In this section, we prove the following result:

\begin{theorem}\label{Th_ GFGRabin=Memory}
	Let $L$ be a "Muller language". The "memory requirements" for $L$ coincide with the size of a minimal "GFG" "Rabin automaton" recognising $L$. 
\end{theorem}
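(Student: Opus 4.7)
The plan is to establish both inequalities between $\memgen(L)$ and the minimum size of a GFG Rabin automaton recognising $L$, and then to close the loop via Proposition~\ref{prop: optimalMemoryDJW}. Writing $n$ for the least such size, I will prove $\memgen(L) \leq n$ by a product-game argument, and produce a GFG Rabin automaton of size $\memtree(\zielonkatree_\F)$ for the converse.

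For $\memgen(L)\leq n$, given a GFG Rabin automaton $\A$ for $L$ with resolver $r$, and an $L$-game $\G$ won by Exist, I form the product game $\G\times\A$: positions are pairs $(x,q)$, Exist simultaneously plays her move in $\G$ and resolves the nondeterminism of $\A$ on the colour produced, and the winning condition is the Rabin condition of $\A$. The GFG property ensures that applying $r$ on top of any winning strategy of Exist in $\G$ yields a winning strategy in the product. Since Rabin conditions are half-positional \cite{EmersonJutla91Determinacy,Klarlund94Determinacy}, Exist wins the product positionally, and projecting out the second component gives a strategy in $\G$ with memory at most $|\A|$.

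For the converse, I would construct a GFG Rabin automaton $\RF$ of size exactly $\memtree(\zielonkatree_\F)$ from the ordered Zielonka tree. Its states should encode the relevant branching information of $\zielonkatree_\F$ — identifying states lying in different children of the same round node — so that the total count matches the inductive ``sum-for-round, max-for-square'' definition of $\memtree$, as illustrated by the branch labels $1,1,2$ in Example~\ref{example: Zielonka-tree}. On reading a letter $a\in\GG$ from a leaf $l$, the automaton moves nondeterministically to a leaf of $\mjump_n(l)$, where $n$ is the deepest ancestor of $l$ whose label contains $a$. The Rabin pairs are indexed by round nodes: for a round node $n$, the green colours are those in the label of some strict descendant of $n$, and the red colours are those absent from the label of $n$. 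The cyclic resolver, which at every choice selects the next branch according to $\mnextn$, implements the Dziembowski--Jurdziński--Walukiewicz strategy and witnesses that $\RF$ is GFG.

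The main obstacle will be the correctness of $\RF$: I need to verify that under the cyclic resolver the induced run is accepting whenever $\minf(w)\in\F$, and that no resolver produces an accepting run when $\minf(w)\notin\F$. The key invariant is that along any infinite run the highest ancestor visited infinitely often eventually stabilises, and by construction it is a round node precisely when $\minf(w)\in\F$; in that case the Rabin pair associated to this node is satisfied, while in the opposite case every round pair is violated because the run climbs above it infinitely often. Combining this with the first inequality applied to $\RF$ and with Proposition~\ref{prop: optimalMemoryDJW} yields $n\leq|\RF|=\memtree(\zielonkatree_\F)=\memgen(L)\leq n$, closing the proof.
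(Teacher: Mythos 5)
Your overall route is the one the paper takes: the product-game argument together with half-positionality of Rabin conditions gives $\memgen(L)\le n$, and a Zielonka-tree-based GFG Rabin automaton of size $\memtree(\zielonkatree_\F)=\memgen(L)$ gives the converse, the two being tied together via Proposition~\ref{prop: optimalMemoryDJW}. Your states, transitions (jumping through $\mjump_n(l)$ for the deepest ancestor $n$ of $l$ containing the letter read) and the cyclic resolver match the construction of Section~\ref{subsection:GFGforMuller}; your constraint on states is Property~\ref{Eq_property-star}, but note it must be read as \emph{keeping apart} the leaves lying below distinct children of the same round node (identifying them would give a max-for-round count, not the sum you need).

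The genuine gap is the acceptance condition. As you define it, the pair indexed by a round node $n$ has green and red sets consisting of \emph{letters} of $\GG$ (letters in labels of strict descendants of $n$, letters outside $\nu(n)$). A Rabin condition evaluated on the input letters alone makes acceptance depend only on $\minf(w)$, not on the run, so your automaton would recognise a Rabin language over $\GG$; but a general Muller language is not of this kind --- already the running example $\F=\{\{a,b\},\{a,c\},\{b\}\}$ fails (its rejecting sets $\{a\}$ and $\{c\}$ union to an accepting set), as do the conditions $\F_n$ of Section~\ref{section:succinctness} --- so no such letter-based condition can be correct. What is needed, and what the paper does, is to let each transition \emph{output the node} $n$ it identifies and to define the pairs over the set of nodes, with $G_n=\{n\}$ and $R_n$ the nodes that are neither $n$ nor descendants of $n$; a run is then accepting iff the minimal node produced infinitely often is unique and round (Lemma~\ref{lemma: acc-sequences-nodes-ZT}), which is the invariant you state informally. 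Two cautions on that invariant: for an arbitrary, badly resolved run the minimal infinitely-produced node need not be unique or round even when $\minf(w)\in\F$ (soundness only requires that acceptance of \emph{some} run forces $\minf(w)\in\F$); and the soundness argument is precisely where Property~\ref{Eq_property-star} is used --- disjointness of the state sets attached to the children of $n$ forces a run trapped inside one child's label to stop producing $n$ --- which is the technical heart of Lemma~\ref{lemma: Rabin recognises L_F and GFG} and is absent from your sketch beyond the acknowledgement that it must be verified.
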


In \Cref{subsection:GFGtomemory} we show the first direction: the size of a "GFG" "Rabin automaton" for a "Muller language"~$L$ is always an upper bound on the "memory" required by the "existential player" on "$L$-games".
In \Cref{subsection:GFGforMuller}, we show how to construct a "GFG" "Rabin automaton" from the "Zielonka tree" of a "Muller condition" of size~$\memtree(\zielonkatree_L)=\memgen(L)$, completing the equivalence. Moreover, we can build this minimal "GFG" Rabin automaton in polynomial time given the "Zielonka tree" of the "Muller condition".

\subsection{A GFG Rabin automaton induces a memory structure for any game}\label{subsection:GFGtomemory}

In this section, we establish \Cref{corollary:automata>=memory}
which states that the "size@automaton" of a "GFG" "Rabin automaton"~$\A$ accepting a "Muller language"~$L$
is an upper bound on the memory required for winning all "$L$-games". Concretely, given an "$L$-game" "won by" the "existential player",
we are able to construct a "memory structure"  "inducing" a "winning strategy" based on $\A$.
The argument is standard: we construct the product game of $\A$ and the "$L$-game", which is a Rabin game in which the "existential player" enjoys a "positional winning strategy"; then we use the $\A$-component of this product as a "memory structure" for a strategy in the original "$L$-game". 

\begin{lemma}\label{lemma: GFGinducesMemory}
	Let~$\A=(Q,\Sigma,Q_0,\Delta,\Gamma, \WW)$ be a  "GFG" "$\WW$-automaton"
	recognising a language
	$L\subseteq\Sigma^\omega$, with~$\WW$ "exist-positional".
	Then if $\G$ is an "$L$-game" won by "Exist", she can win it using a strategy given by a "memory structure" $\M=(Q,r_0,\mu,\ss)$.
\end{lemma}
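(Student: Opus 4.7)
The plan is to use the classical product construction together with the exist-positionality of $\WW$. I would first construct the product game $\G \times \A$: its positions are pairs $(v, q) \in V \times Q$ with initial position $(x_0, r_0)$. From a position $(v, q)$ with $v \in V_E$, Exist selects jointly an edge $(v, a, v') \in E$ and a matching transition $(q, a, c, q') \in \Delta$, producing a product edge coloured $c$ leading to $(v', q')$; $\ee$-edges of $\G$ are copied verbatim with $q$ unchanged and colour $\ee$. From $(v, q)$ with $v \in V_A$, Univ picks the $\G$-edge $(v, a, v')$ first, and then an intermediate Exist-owned position resolves the nondeterminism of $\A$ by choosing a transition $(q, a, c, q') \in \Delta$. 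The winning condition of the product game is $\WW$, read on the sequence of non-$\ee$ colours of product edges.

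Next I would verify that Exist wins $\G \times \A$ from $(x_0, r_0)$. Let $\tau$ be her winning strategy in $\G$ and let $(r_0, r)$ be a resolver for $\A$. Playing $\tau$ on the $\G$-component and using $r$ to pick the $\A$-transitions both at her own and at Univ's moves, Exist would produce plays whose $\G$-projection is $\tau$-consistent and hence winning, so the associated colour word lies in $L = L(\A)$; the resolver then yields an accepting $\A$-run, and the product play satisfies $\WW$. Since $\WW$ is exist-positional, Exist has a positional winning strategy $\hat\sigma$ in $\G \times \A$.

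From $\hat\sigma$ I would read off the memory structure $\M = (Q, r_0, \mu, \ss)$. Define $\ss(q, v)$, for $v \in V_E$, to be the $\G$-projection of the product edge selected by $\hat\sigma$ at $(v, q)$. Define $\mu(q, e)$, for $e = (v, a, v') \in E$, as follows: if $v \in V_E$, take the $Q$-coordinate of the target of $\hat\sigma(v, q)$ when $e$ is the $\G$-projection of this edge (and an arbitrary value otherwise); if $v \in V_A$, take the $Q$-coordinate chosen by $\hat\sigma$ at the intermediate position reached after Univ plays $e$; if $e$ is an $\ee$-edge, set $\mu(q, e) = q$. Any play $\pi$ of $\G$ consistent with the strategy induced by $\M$ would lift canonically to a play of $\G \times \A$ consistent with $\hat\sigma$: the sequence of memory states obtained by iterating $\mu$ along $\pi$ from $r_0$ reproduces exactly the $Q$-component of the lifted play. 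Since the lifted play is winning in the product, its colour sequence belongs to $\WW$, so the associated $\A$-run on $\gg(\pi)$ is accepting, whence $\gg(\pi) \in L$ and $\pi$ is winning in $\G$.

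The main obstacle I anticipate is ensuring that the intermediate Exist-positions introduced at $V_A$-vertices of $\G$ do not enlarge the memory set, since a naive product would give positions in $V \times Q \times \Sigma$. This is avoided because $\hat\sigma$ on such positions outputs an element of $Q$ determined by the pair $(q, e)$ (together with the incoming letter, which is part of $e$), and so this output can be absorbed into the single update function $\mu \colon Q \times E \to Q$ without any extra bookkeeping.
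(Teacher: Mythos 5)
Your proposal is correct and follows essentially the same route as the paper: form the product of $\G$ with $\A$, use the resolver together with a winning strategy in $\G$ to show Exist wins the product, invoke exist-positionality of $\WW$ to get a positional strategy there, and project its $Q$-component into the update and next-move functions of a memory structure on $Q$. The only differences are cosmetic (the paper inserts an Exist-owned intermediate position recording $\dd(q,c)$ at every move and cites Henzinger--Piterman for winning the product, whereas you merge the choice at Exist's vertices and re-argue that step directly).
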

\begin{proof}
	Let $\G=(V=V_E\uplus V_A,E,x_0,L)$ and $\A=(Q,\Sigma,r_0,\Delta,\Gamma, \WW)$, where $r_0\in Q_0$ is the initial state chosen by some "resolver".
	We consider the product "$\WW$-game" $\G'$ in which:
	\begin{itemize}
		\item Positions are elements in $V'=(V\times Q)\cup (V\times \GG \times \P(Q))$. The initial position is $(x_0,r_0)$.
		\item "Exist"'s positions are $V_E'=(V_E\times Q) \cup (V\times \GG \times \P(Q))$.
		\item There is an $\varepsilon$-coloured edge from $(x,q)$ to $(x',c,\dd(q,c))$  if $(x,c,x')\in E$, $c\neq \varepsilon$, and to $(x',q)$ if $(x,\ee,x')\in E$.
		\item There is a $c$-labelled edge from $(x,c,S)$ to $(x,q')$ for all $q'\in S\subseteq Q$.
		\item The "winning condition" is $\WW$.
	\end{itemize}
	
	In short, in this game the players still negotiate a play in $G$, but in addition, the "existential player" must simultaneously build an accepting run on the labelling of this play in $\A$.
	
	If $\A$ is GFG then whenever the existential player wins in $\G$, she also wins in $\G'$~\cite[Theorem 3]{HP06} by playing a "winning strategy" on the $\G$ component of $\G'$ and using the "resolver" for $\A$ to choose the successor state in the $\A$ component. 
	
	This strategy is not necessarily positional. However, since $\WW$ is "exist-positional", the "existential player" also has a positional strategy $s:V_E'\to E'$. We can now build a memory structure $\M_s=(Q,r_0,\mu,\sigma)$ that projects the strategy $s$ onto $\G$ (and updates itself with the projection of $s$ onto $\A$):
	\begin{itemize}
		\item $\mu(q,e)= q$ if $e$ is an $\varepsilon$-coloured move of $\G$ and
		\item $\mu(q,(x,c,x'))=q'$ where 
		$q'=s(x',c, \dd(q,c))$ otherwise;
		\item $\sigma(q,x)=s(x,q)$.
	\end{itemize} 
	
	Since $s$ is a winning strategy in $\G'$, its projection onto $\G$ is a strategy that only agrees with plays with labels in $L(\A)$, that is, winning plays. 
\end{proof}

\begin{remark}
	Note that there is a slight subtlety here: the "resolver", which induces a "winning strategy" in the product game, does not need to be positional. In fact, it might require exponential memory~\cite[Theorem 1]{KS15}. Yet, for each "$L$-game", a memory based on $\A$ suffices.
\end{remark}

\begin{lemma}[\cite{Klarlund94Determinacy,Zielonka1998infinite}]
	"Rabin conditions" are "exist-positional".
\end{lemma}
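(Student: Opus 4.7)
The plan is to prove this by induction on the number $r$ of "Rabin pairs" defining the condition, relying on the determinacy of $\omega$-regular games (so that it suffices to show that from every position in Exist's winning region she has a positional winning strategy).

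For the base case $r=0$, the "Rabin condition" is never satisfied, so Exist wins from no position and there is nothing to prove. For the inductive step, assume the result for conditions with at most $r-1$ pairs and consider a "Rabin-game" $\G$ with pairs $\{(G_1,R_1),\dots,(G_r,R_r)\}$. Let $W$ denote Exist's winning region. The first move is to isolate the contribution of the pair $(G_1,R_1)$: let $X \subseteq W$ be the set of positions from which Exist can force, while avoiding edges coloured in $R_1$, infinitely many visits to a letter of $G_1$. This is a Büchi-type subgame in which positional winning strategies are known to exist for Exist. On $W \setminus X$, the plan is to show that from these positions Exist cannot rely on pair $1$ to win: any play consistent with a winning strategy either visits $R_1$ infinitely often, or eventually stops visiting $G_1$. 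Hence winning from $W \setminus X$ reduces to winning a Rabin game with the $r-1$ pairs $\{(G_2,R_2),\dots,(G_r,R_r)\}$, so the inductive hypothesis supplies a positional winning strategy there.

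It then remains to glue the two positional strategies together. For this I would take $X$ to already contain the Exist-attractor to the pair-1 region, so that plays consistent with the glued strategy cannot drift from $W \setminus X$ back into $X$ in a way that would break the pair-$1$ argument; correspondingly, on $W \setminus X$ one works in the subgame obtained by removing $X$ from the arena, where Univ's options are only restricted, ensuring the inductive subgame is still won by Exist. Concatenating the positional strategy on $X$ (ensuring the pair-$1$ Büchi objective) with the inductively obtained positional strategy on $W \setminus X$ yields a single positional strategy winning from all of $W$.

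The main obstacle I expect is the careful bookkeeping for the gluing step: one must verify that a play starting in $W \setminus X$ never enters $X$ under Exist's strategy (handled by the attractor closure), and that a play trapped in $X$ genuinely satisfies the full Rabin condition through the pair $(G_1,R_1)$. Both points are standard once the attractor and subgame constructions are set up properly, but they are where the delicacy lies; the rest of the argument is bookkeeping around the inductive hypothesis and the positionality of Büchi conditions for Exist.
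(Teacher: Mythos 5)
The paper does not actually prove this lemma -- it is imported from Klarlund and Zielonka -- so the only question is whether your argument stands on its own, and it does not: the key reduction step is false. You define $X$ as the set of positions from which Exist can \emph{force} infinitely many $G_1$-visits while avoiding $R_1$, and then claim that on $W\setminus X$ every play consistent with a winning strategy either sees $R_1$ infinitely often or $G_1$ only finitely often, so that the game on $W\setminus X$ reduces to the Rabin game with pairs $2,\dots,r$. Here is a counterexample: take a single vertex $v$ owned by Univ with two self-loops coloured $g$ and $h$, and the Rabin pairs $(G_1,R_1)=(\{g\},\emptyset)$ and $(G_2,R_2)=(\{h\},\emptyset)$. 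Every play from $v$ is winning (one of $g,h$ occurs infinitely often), so $W=\{v\}$, and Exist trivially wins positionally. But Exist cannot force infinitely many $g$'s, so $X=\emptyset$ (and its attractor is empty), while in the reduced game with the single pair $(\{h\},\emptyset)$ Univ wins from $v$ by playing $g^\omega$. The play $g^\omega$ is consistent with Exist's (unique) winning strategy, never visits $R_1$ and visits $G_1$ infinitely often, so the claimed dichotomy fails: winning plays from $W\setminus X$ may very well be won through pair $1$ even though Exist cannot force the pair-$1$ B\"uchi-safety objective. Consequently the inductive hypothesis cannot be applied to $W\setminus X$, and the induction collapses.

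This is not a bookkeeping issue that attractor closure repairs; it is the precise difficulty that the known proofs are built to handle. Correct arguments use either Klarlund's Rabin progress measures, Emerson--Jutla-style signatures (a ranking over pairs refined lexicographically, rather than eliminating one pair outright), or Zielonka's structural argument exploiting the fact that the family of rejecting sets of a Rabin condition is closed under union (equivalently, every round node of the Zielonka tree has at most one child), with an induction that interleaves restriction of the colour set with attractor decompositions of the arena. If you want an elementary write-up, the union-closure route is probably the closest in spirit to what you attempted, but the set you peel off must be chosen with respect to the whole condition restricted to a smaller colour set, not with respect to a single pair's forcing region.
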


As a direct consequence we obtain the following Proposition.
\begin{proposition}\label{prop: GFGRabin->Memory}
	Let $L$ be a "Muller language" "accepted@@automaton" by a "GFG" "Rabin automaton"  $\A$. 
	Then, in every "$L$-game" $\G$ won by the existential player, she can win using a strategy given by a "memory structure" of size $|\A|$.
\end{proposition}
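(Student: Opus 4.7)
The proposition is a direct combination of the two preceding results, so my plan is essentially to invoke them in sequence. First, I would observe that since $\A$ is a Rabin automaton, its acceptance condition $\WW$ is a Rabin condition on the colour alphabet of $\A$. The lemma cited just above (attributed to Klarlund and Zielonka) tells us that Rabin conditions are \emph{exist-positional}, so the hypothesis ``$\WW$ is exist-positional'' in \Cref{lemma: GFGinducesMemory} is satisfied.

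With that in hand, I would apply \Cref{lemma: GFGinducesMemory} directly. Given any $L$-game $\G$ won by the existential player, the lemma produces a memory structure $\M = (Q, r_0, \mu, \sigma)$ whose underlying set of memory states is exactly the state set $Q$ of $\A$, and which induces a winning strategy for Exist in $\G$. By definition of the \emph{size} of a memory structure (the cardinality of its state set) and of the \emph{size} of an automaton ($|\A| = |Q|$), the resulting memory structure has size exactly $|\A|$, which is precisely the bound claimed.

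There is no real obstacle here: the work has already been done in \Cref{lemma: GFGinducesMemory}, whose proof handles the product construction with $\A$, the exploitation of positionality of the optimal strategy in the product Rabin game, and the projection back onto a memory structure for $\G$. The only thing left to check when writing the proof is that the Rabin condition used by $\A$ lives over the colour alphabet $\Gamma$ of $\A$ (and not the input alphabet $\Sigma = \Gamma$ of the $L$-game), but this is transparent from the statement of \Cref{lemma: GFGinducesMemory}, which already separates the roles of $\Sigma$ and $\Gamma$. So the proof reduces to one or two sentences chaining the two preceding lemmas.
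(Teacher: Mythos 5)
Your proposal is correct and matches the paper exactly: the paper also obtains \Cref{prop: GFGRabin->Memory} as an immediate consequence of \Cref{lemma: GFGinducesMemory} together with the Klarlund--Zielonka lemma that Rabin conditions are exist-positional, the memory structure having state set $Q$ and hence size $|\A|$. Nothing further is needed.
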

\begin{corollary}\label{corollary:automata>=memory}
	If $\A$ is a "GFG" "Rabin automaton" "accepting@@language" a "Muller language" $L$, then $\memgen(L)\leqslant|\A|$.
\end{corollary}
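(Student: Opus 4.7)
The plan is to obtain this corollary as a direct consequence of \Cref{prop: GFGRabin->Memory}, which I would invoke almost verbatim. Recall that by definition $\memgen(L)$ is the smallest integer $n$ such that the existential player has a winning strategy given by a memory structure of size at most $n$ in every $L$-game that she wins. \Cref{prop: GFGRabin->Memory} states precisely that if $\A$ is a "GFG" "Rabin automaton" accepting $L$, then in every $L$-game won by Exist, she can win using a strategy given by a memory structure of size $|\A|$. Taking $n = |\A|$ is therefore a valid witness in the definition of $\memgen(L)$, yielding $\memgen(L) \leq |\A|$.

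Since all the conceptual content has already been developed in \Cref{lemma: GFGinducesMemory} (building a memory structure via the product of the game with $\A$, then collapsing to a positional strategy using "exist-positionality" of the Rabin condition) and the preceding remark that Rabin conditions are indeed "exist-positional", the proof of the corollary itself requires no new construction. There is no real obstacle at this step: the only thing to be careful about is that the quantifier in the definition of $\memgen(L)$ ranges over all $L$-games won by Exist, which is exactly the quantification provided by \Cref{prop: GFGRabin->Memory}, so the bound is uniform in the game as required.
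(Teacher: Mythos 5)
Your proposal is correct and matches the paper's (implicit) argument: the corollary is stated as an immediate consequence of \Cref{prop: GFGRabin->Memory}, and unwinding the definition of $\memgen(L)$ with $n=|\A|$ as witness is exactly what is intended. No further content is needed.
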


\subsection{An optimal construction of a GFG Rabin automaton}\label{subsection:GFGforMuller}

So far, we have seen that given a "GFG Rabin automaton", it can serve as a "memory structure" for the games of which it "accepts@automaton" the "winning condition". In this section we do the converse: we build a minimal "GFG Rabin automaton" for a "Muller language" $L$ of the same size as the minimal memory required to win in $L$-games. 
\subsubsection{The construction}

\begin{proposition}\label{prop: Construction_GFGRabin}
	Let $\F\subseteq \Pplus(\GG)$ be a "Muller condition". There exists a "GFG Rabin automaton" recognising $"\L_\F"$ of size $\memgen(\muller)$.
\end{proposition}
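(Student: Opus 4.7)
My plan is to construct a concrete automaton $\RF$ from the (ordered) Zielonka tree $\zielonkatree_\F=(N,\ancestor,\nu)$ and then verify (i)~that it accepts $\Lang{\F}$, (ii)~that it admits a resolver (so it is GFG), and (iii)~that its number of states is $\memtree(\zielonkatree_\F)$, which equals $\memgen(\F)$ by \Cref{prop: optimalMemoryDJW}.

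\textbf{State set.} The idea is that the automaton tracks one leaf of $\zielonkatree_\F$, but leaves below a square node can share a state. Formally, I will define an equivalence relation $\sim$ on $\mleaves(\zielonkatree_\F)$ by setting $l\sim l'$ whenever the chain of round-node ancestors of $l$ and $l'$ coincides, and the two leaves descend from the same children of each of these round ancestors. A direct induction on the structure of the tree shows that the number of classes of $\sim$ equals $\memtree(\zielonkatree_\F)$: a round node contributes the sum over its children (each child gives disjoint classes), a square node contributes the max (classes of different children are identified). Take the set of $\sim$-classes as the set of states $Q$, with the class of the leftmost leaf as initial state.

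\textbf{Transitions.} Given a class $[l]$ and a letter $c\in\GG$, let $n_c(l)$ be the deepest ancestor of $l$ whose label contains $c$. We send $[l]$ on input $c$ to the class $[l']$ where $l'$ is the leftmost leaf in $\mjump_{n_c(l)}(l)$ (using the $\mnextn$ ordering to move to the next sibling cyclically, then descending leftmost). The choice is well-defined on classes because two $\sim$-equivalent leaves share the same round ancestors and the same children through them, so they jump to $\sim$-equivalent successors.

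\textbf{Rabin pairs.} I create one pair $(G_n,R_n)$ per round node $n$ of $\zielonkatree_\F$. Put $c\in G_n$ if $n_c(l) = n$ for leaves $l$ below $n$ (so the automaton ``rests'' at level $n$), and $c\in R_n$ if reading $c$ would force going strictly above $n$. The correctness computation is then: on an input $w$, let $n_w$ be the deepest round node such that, from some point on, all transitions stay in the subtree rooted at $n_w$. Standard Zielonka-tree arguments show that $\minf(w)\in\F$ iff $n_w$ is round, in which case the Rabin pair associated with $n_w$ is satisfied (green infinitely often, red finitely often), and conversely.

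\textbf{GFG and size.} The resolver is simply the deterministic transition described above: it does not depend on the future, so the induced run is well-defined for any input. Because the update of the memory structure of Dziembowski--Jurdziński--Walukiewicz is exactly mirrored by these transitions, their analysis guarantees that on every $w\in\Lang{\F}$ the induced run is accepting, proving GFG. The state count is $\memtree(\zielonkatree_\F)=\memgen(\F)$ by the remark above and \Cref{prop: optimalMemoryDJW}.

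\textbf{Main obstacle.} The delicate part is the correctness of the Rabin acceptance, i.e.\ showing that for every input word the deepest round ancestor $n_w$ ultimately stabilised by the run's trajectory is round iff $\minf(w)\in\F$. This requires a careful induction on the tree following the semantics of $\mjump$, and is where we use the alternation of round and square nodes in $\zielonkatree_\F$: if $n_w$ is square (so $\nu(n_w)\notin\F$), then the cyclic rotation through its children visits every child infinitely often, and in particular the child whose label is a superset of $\minf(w)$, contradicting the definition of $n_w$. Once this invariant is in place, the remaining steps are routine.
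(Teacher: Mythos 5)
There is a genuine gap, and it sits in your very first step: the equivalence relation $\sim$ does not have $\memtree(\zielonkatree_\F)$ classes. Under your definition two leaves can only be merged if they have exactly the same round ancestors (and go through the same child of each of them), but the max-rule at a square node requires merging leaves lying in \emph{different} subtrees of that square node, and such leaves have disjoint sets of round ancestors as soon as each of those subtrees contains a round node; your relation then keeps them apart. Already in the running example of the paper (Example~\ref{example: Zielonka-tree}, Figure~\ref{Fig: ZielonkaTree-example}) your $\sim$ has three singleton classes: $\delta$ has round ancestor $\bb$, while $\ee$ and $\zeta$ have round ancestor $\gg$ and lie under different children of $\gg$, so no two of $\delta,\ee,\zeta$ are equivalent --- yet $\memgen(\F)=2$. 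Your construction therefore produces (a Rabin-acceptance version of) the deterministic Zielonka-tree parity automaton, of size equal to the number of $\sim$-classes, not an automaton of size $\memgen(\F)$; the inductive count you sketch (``classes of different children of a square node are identified'') does not follow from your definition of $\sim$.

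The missing idea is that the optimal state set comes from a labelling $\eta$ of the leaves by $\{1,\dots,\memtree(\zielonkatree_\F)\}$ subject \emph{only} to the constraint that leaves below distinct children of the same round node receive distinct labels; such a labelling deliberately merges leaves whose round ancestors are unrelated (e.g.\ $\delta$ and $\ee$ above) and is built by induction (disjoint label sets for the children of a round root, reused label sets for the children of a square root). Crucially, this merging is not compatible with the leaf-level transition function, so the resulting automaton is genuinely nondeterministic: a state carries several $c$-transitions, one per leaf with that label, and the resolver must remember which leaf is currently intended --- in effect it runs the larger deterministic Zielonka-tree parity automaton on the side. Your claim that ``the resolver is simply the deterministic transition described above'' cannot be repaired while keeping the announced size: a \emph{deterministic} Rabin automaton of size $\memgen(\F)$ does not exist in general, since \Cref{Th: GFGRabin-Exp-Succinct} exhibits Muller languages whose deterministic Rabin automata are exponentially larger than $\memgen$. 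A secondary imprecision: you place the Rabin pairs on input letters, but whether reading $c$ ``rests at level $n$'' depends on the current leaf and not on $c$ alone; the condition must be put on output colours of transitions (the node of the Zielonka tree used by the jump), which is how the correctness argument via the unique minimal node seen infinitely often goes through.
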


To prove Propositon~\ref{prop: Construction_GFGRabin}, we build a "GFG" "Rabin" "automaton" $\RF=(Q, \GG, q_0, \DD, \GG', R)$
for $"\L_\F"$ based on the "Zielonka tree" $\zielonkatree_\muller$, as illustrated in~\cref{Fig_GFGRabin}. 
We use a mapping from the leaves of $"\zielonkatree_\muller"$ to the "states" of the "automaton" that guarantees that two leaves of which the last common ancestor is a "round node" cannot map to the same state. The number of states required to satisfy this condition (\ref{Eq_property-star} below) coincides with $\memtree(\zielonkatree_\muller)$. Then, for each leaf of $"\zielonkatree_\muller"$ and letter $c\in \Gamma$, we identify its last ancestor $n$ in $\zielonkatree_\muller$ containing $c$, and, using the $\mjump_n$ function (defined in~\cref{subsection:Zielonka}), pick a leaf below the next child of $n$. We add a $c$-transition with label $n$ between the states mapped to from these leaves. This way, we can identify a "run" in the automaton with a promenade through the nodes of the "Zielonka tree". If during this promenade a unique minimal node (for $\ancestor$) is visited infinitely often, it is not difficult to see that the sequence of input colours belongs to $\F$ if and only if the label of this minimal node is an accepting set (it is a "round node"). We devise a "Rabin condition" over the set of nodes of the "Zielonka tree" accepting exactly these sequences of nodes.	 


We now describe the construction of the "automaton" $\AP\intro*\RF=(Q, \GG, q_0, \DD, N, R)$ formally, starting from the "Zielonka tree" $\zielonkatree_\muller = (N, \ancestor, \nu:N \to \Pplus(\GG))$, and then proceed to prove its correctness.\\

\noindent\textbf{States.}
\AP First, we set $Q=\{1,2,\dots, \memtree(\zielonkatree_\muller)\}$ and we label the "leaves" of $\zielonkatree_\muller$ by a mapping $""\eta"": \mleaves(\zielonkatree_\muller) \to \{1,2,\dots, \memtree(\zielonkatree_\muller)\}$ verifying the property:
\begin{align}\label{Eq_property-star}
\nonumber&\text{If } n \in \zielonkatree_\muller \text{ is a "round node" with children } n_1 \neq n_2 \text{, for any pair}\\
\tag{$\star$}&\text{of leaves } l_1 \text{ and } l_2 \text{ below } n_1 \text{ and } n_2 \text{, respectively, } \kl{\eta}(l_1)\neq \kl{\eta}(l_2).
\end{align}

\begin{lemma}
	For every "Zielonka tree" $\zielonkatree_\muller$ there is a mapping verifying Property~\ref{Eq_property-star} of the form $\kl{\eta}: \mleaves(\zielonkatree_\muller) \to \{1,2,\dots, \memtree(\zielonkatree_\muller)\}$ .
\end{lemma}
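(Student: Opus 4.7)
The plan is to proceed by structural induction on the "Zielonka tree" $\zielonkatree_\muller$, following exactly the recursive clauses that define $\memtree$.

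For the base case, when $\zielonkatree_\muller$ consists of a single node, we have $\memtree(\zielonkatree_\muller)=1$ and the unique leaf is mapped to $1$; property~\ref{Eq_property-star} is vacuous since there are no round nodes with two distinct children.

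For the inductive step, let the "root" have "children" $n_1,\dots,n_k$, and let $T_1,\dots,T_k$ be the "subtrees of~$T$ rooted" at them. By induction, for each $i$ there is a mapping $\eta_i\colon \mleaves(T_i)\to\{1,\dots,\memtree(T_i)\}$ satisfying~\ref{Eq_property-star} within $T_i$. We handle two cases according to the type of the root:

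\textbf{Root is square.} Here $\memtree(\zielonkatree_\muller)=\max_i \memtree(T_i)$, so each $\eta_i$ already takes values in $\{1,\dots,\memtree(\zielonkatree_\muller)\}$. We simply let $\eta$ be the union of the $\eta_i$. Property~\ref{Eq_property-star} must be checked at every "round node" of~$T$; each such node lies in a unique $T_i$, so the property is inherited from the inductive hypothesis. No condition has to be verified at the root, since it is "square".

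\textbf{Root is round.} Here $\memtree(\zielonkatree_\muller)=\sum_i \memtree(T_i)$. We shift the codomain of each $\eta_i$ into a dedicated block by defining, for $l\in\mleaves(T_i)$,
\[ \eta(l)=\eta_i(l)+\sum_{j<i}\memtree(T_j). \]
Then $\eta$ takes values in $\{1,\dots,\memtree(\zielonkatree_\muller)\}$, and the blocks assigned to distinct children are disjoint, so leaves below distinct children of the root always get distinct values. For a "round node" strictly below the root, the required inequality comes from the inductive hypothesis inside the corresponding $T_i$, which is preserved by the shift.

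The main (very mild) obstacle is just to be careful that in the round case the two sources of inequalities---the disjointness of the blocks for children of the current root, and the already-established property inside each subtree---together cover every round node of the whole tree; the case analysis above makes this transparent.
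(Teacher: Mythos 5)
Your proof is correct and follows essentially the same route as the paper: induction on the tree structure with a case split on whether the root is round or square, reusing the subtree labellings as-is in the square case and placing them in disjoint blocks (the paper via arbitrary bijections onto a partition of $\{1,\dots,\memtree(\zielonkatree_\muller)\}$, you via explicit offsets) in the round case. The only difference is cosmetic, and your explicit check that every round node's constraint is covered either by block-disjointness at the root or by the inductive hypothesis inside a subtree is a fine way to present it.
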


\begin{proof}
	We prove it by induction in the "height" of $\zielonkatree_\muller$. Let $n_1,\dots, n_k$ be the "children" of the root of $\zielonkatree_\muller$. We write $\F_i$ to denote the "Muller condition restricted to $\nu(n_i)$" and $\kl{\eta}_i: \mleaves(\zielonkatree_{\muller_i}) \to \{1,2,\dots, \memtree(\zielonkatree_{\muller_i})\}$ be a labelling verifying Property~\ref{Eq_property-star}, for $1\leq i \leq k$.  We distinguish two cases according to the shape of the root.
	
	If the root of $\zielonkatree_\muller$ is a "square node" ($\GG\notin \F$), then the mapping $\kl{\eta}(l)=\kl{\eta}_i(l)$ if the leaf $l$ belongs to the subtree $\zielonkatree_{\muller_i}$ verifies Property~\ref{Eq_property-star}.
	
	If the root of $\zielonkatree_\muller$ is a "round node" ($\GG\in \F$), then $\memtree(\zielonkatree_\muller)=\sum_{i=1}^k \memtree(\zielonkatree_{\muller_i})$, and we can partition $\{1,2,\dots, \memtree(\zielonkatree_\muller)\}$ into disjoint sets $C_1,\dots, C_k$ of size $|C_i|=\memtree(\zielonkatree_{\muller_i})$. We write $\ss_i$ for a bijection from $\{1,\dots, \memtree(\zielonkatree_{\muller_i})\}$ to $C_i$. Then, the mapping
	$\kl{\eta}(l)=\ss_i(\kl{\eta}_i(l)), $ if the leaf $ l$ belongs to the subtree $\zielonkatree_{\muller_i}$ verifies Property~\ref{Eq_property-star}.
\end{proof}

We suppose that the image of the leftmost leaf under $\kl{\eta}$ is $1$ and choose the initial state $q_0$ to be $1$.
In Example~\ref{example: Zielonka-tree}, the labelling $\kl{\eta}(\delta)=\kl{\eta}(\ee)=1$, $\eta(\zeta)=2$ verifies Property~\ref{Eq_property-star}.\\

\noindent\textbf{Transitions.} 
For each leaf $l\in \mleaves(\zielonkatree_\muller)$ and each letter $c\in \GG$, we define a $c$-"transition" from $\eta(l)$, with an output label from $\GG'=N$, as follows: let $n$ be the maximal ancestor of $l$ that contains the letter $c$ in its label and let $l'$ be the leftmost leaf in $\mjump_n(l)$\footnote{We could add all transitions $\{(\eta(l), n, \eta(l')) \: : \: l'\in \mjump_n(l)\}$ to $\DD$. However, a "resolver" for the "GFG" automaton just needs to make use of one of these transitions, so in order to simplify the automaton we make an arbitrary choice (the leftmost leaf).}. Then, $(\eta(l),c, n, \eta(l'))\in \DD$.
That is, if we are in a state $\eta(l)$, when we read the letter $c\in\GG$ we can choose to go up in the Zielonka tree from $l$ until visiting a node $n$ with $c$ in its label. We produce the letter $n$ as output, then we change to the next child of $n$ (in a cyclic way) and we descend to the leftmost leaf below it. The destination is the $\eta$-label of this leaf\footnote{We remark that $l'$ is the target of the transition of the Zielonka tree parity automaton from $l$ reading letter `$c$'~\cite{CCF21Optimal}. See also \Cref{subsubsection: relationWithZielonkaParity}.}. 


Following the above definition, we obtain a mapping from transitions in the automaton to $\mleaves(\zielonkatree_\F)\times \GG \times N \times \mleaves(\zielonkatree_\F)$. We say that $l$ and  $l'$ are the "leaves" ""corresponding to"" the transition $(q,a,n,q')$  if this transition is sent to $(l,a,n,l')$ by this mapping. The node $n$ produced as output  is the last common ancestor of $l$ and $l'$.
The automaton obtained in this way might present multiple transitions labelled by the same input letter between two states. We will show in Proposition~\ref{prop: simplification_Rabin} that duplicated transitions can be removed.\\

\noindent\textbf{"Acceptance condition".}
We define a "Rabin condition" over the alphabet $\GG'=N$, that is the set of nodes of the "Zielonka tree". We define a "Rabin pair" for each "round node" of $\zielonkatree_\muller$ (that is, nodes whose label is an accepting set of letters for $\F$): $R=\{(G_n,R_n)\}_{n\in\mroundnodes}$. Let $n$ be a "round node" and $n'$ be a general node of $\zielonkatree_\muller$:
\begin{equation*}
\begin{cases}
n' \in G_n & \text{ if } n'=n,\\
n' \in R_n & \text{ if }  n'\neq n \text{ and } n \text{ is not an "ancestor" of } n'.
\end{cases}
\end{equation*}

That is, for the letter $n'$, the "Rabin pairs" corresponding to "round" ancestors of  $n'$ are not affected by it (they are ``"orange" in $n'$''). If this node $n'$ is "round", then it belongs to $G_{n'}$ (this pair is ``"green" in $n'$''). For any other node $n\in \mroundnodes$, we have $n' \in R_n$ (the pair is ``"red" in $n'$''). 


\begin{remark}
	The construction presented depends on the "order" of the nodes of the "Zielonka tree". However, the size of the resulting automaton is independent of this order.
\end{remark}

\begin{example}\label{example: GFG-Rabin}
	Let $\F = \{\{a,b\}, \{a,c\}, \{b\}\}$ be the "Muller condition" from Example~\ref{example: Zielonka-tree}. The labelling of the leaves of the "Zielonka tree" given by $\eta(\delta)=\eta(\ee)=1$, $\eta(\zeta)=2$ verifies Property~\ref{Eq_property-star}. Figure~\ref{Fig_GFGRabin} shows the "GFG" Rabin automaton obtained by following this procedure.
	
	\begin{figure}[ht]
		\centering		
		\begin{tikzpicture}[square/.style={regular polygon,regular polygon sides=4}, align=center,node distance=2cm,inner sep=2pt]
		
		\node at (0,2) [state] (1) {$1$};
		\node at (3,2) [state] (2) {$2$};
		
		\path[->] 
		(1)  edge [in=160,out=200,loop, color=Navy] node[left] {$a:\textcolor{Violet2}{\delta}$ }   (1)
		(1)  edge [in=110,out=150,loop, color=Navy] node[above] {$b:\textcolor{Violet2}{\bb}$ }   (1)
		(1)  edge [in=60,out=100,loop, color=Navy] node[above] {$c:\textcolor{Violet2}{\aa}$ }   (1)
		
		(1)  edge [out=210,in=250,loop, color=Green3] node[below] {$a:\textcolor{Violet2}{\ee}$ }   (1)
		(1)  edge [out=260,in=300,loop, color=Green3] node[below] {$b:\textcolor{Violet2}{\aa}$ }   (1)
		(1)  edge [in=210,out=-30, color=Green3] node[below] {$c:\textcolor{Violet2}{\gg}$ }   (2)
		
		(2)  edge [color=black] node[above] {$a:\textcolor{Violet2}{\gg}$ }   (1)
		(2)  edge [in=30,out=150, color=black] node[above] {$b:\textcolor{Violet2}{\aa}$ }   (1)
		(2)  edge [in=-20,out=20,loop, color=black] node[right] {$c:\textcolor{Violet2}{\zeta}$ }   (1);
		
		\end{tikzpicture}
		\caption{ The GFG Rabin automaton obtained from the "Zielonka tree" $\Z_{\F}$.}
		\label{Fig_GFGRabin}
	\end{figure}
	The "Rabin condition" of this automaton is given by two "Rabin pairs" (corresponding to the "round nodes" of the "Zielonka tree" in Figure~\ref{Fig: ZielonkaTree-example}):
	
	\centering
	\begin{tabular}{c c}
		$G_\bb = \{\bb\}$, & $R_\bb= \{\aa, \gg, \ee, \zeta\}$,\\ 
		$G_\gg = \{\gg\}$, & $R_\gg= \{\aa, \bb, \delta\}$. 
	\end{tabular}
	
\end{example}

\subsubsection{Proof of correctness}

\begin{lemma}\label{lemma: acc-sequences-nodes-ZT}
	Let $w = n_0n_1n_2\dots \in N^\oo$ be an infinite sequence of nodes of the "Zielonka tree". The word $w$ satisfies the "Rabin condition" defined above if and only if there is a unique minimal node for the "ancestor relation" in $\minf(w)$ and this minimal node is "round" (recall that the root is the minimal element in $\zielonkatree_\F$).
\end{lemma}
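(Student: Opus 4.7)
The plan is to reduce the statement to a direct verification by unfolding the definition of the Rabin condition pair-by-pair, interpreting each pair via membership in $\minf(w)$. Writing $\ancestor$ reflexively, the definition gives that for a round node $n$:
\begin{itemize}
\item $G_n \cap \minf(w) \neq \emptyset$ iff $n \in \minf(w)$, since $G_n = \{n\}$;
\item $R_n \cap \minf(w) = \emptyset$ iff every $n' \in \minf(w)$ satisfies $n = n'$ or $n \ancestor n'$, that is, $n \ancestor n'$ for all $n' \in \minf(w)$.
\end{itemize}
Hence the pair $(G_n,R_n)$ is satisfied by $w$ if and only if $n \in \minf(w)$ and $n$ is an ancestor of every element of $\minf(w)$, i.e.\ $n$ is a $\ancestor$-minimum of $\minf(w)$.

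From here both implications are almost immediate. For the $(\Rightarrow)$ direction, take a round node $n$ whose pair is satisfied; the equivalence above yields that $n$ is a minimum of $\minf(w)$, and I will note that minimality is automatically unique because in a "tree" all "ancestors" of a fixed node are totally ordered by $\ancestor$, so two minima would be comparable and therefore equal. For the $(\Leftarrow)$ direction, if $\minf(w)$ admits a unique minimum $n$ and that minimum is a "round node", then by the same equivalence the pair $(G_n,R_n)$ is witnessed, so $w$ satisfies the Rabin condition.

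I do not expect any real obstacle here: the construction of the Rabin pairs was tailored exactly so that each pair monitors the event ``node $n$ is the $\ancestor$-minimum of $\minf(w)$'', and the lemma just makes this observation explicit. The only subtlety worth stating carefully in the write-up is the uniqueness argument, which uses the defining property of "trees" that ancestors are totally ordered, and the convention that the root is the $\ancestor$-minimum of $N$ so that we freely treat $\ancestor$ as reflexive when comparing $n$ to itself.
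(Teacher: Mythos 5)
Your proposal is correct and follows essentially the same route as the paper: unfold each Rabin pair $(G_n,R_n)$ to see that it is satisfied exactly when $n\in\minf(w)$ and $n$ is an $\ancestor$-ancestor of every node in $\minf(w)$, then translate this into the unique-minimal-round-node formulation in both directions. The paper presents the two implications separately rather than as a single equivalence, and it glosses over the step that a unique minimal node of $\minf(w)$ is in fact an ancestor of all its elements with the same brevity as you do, so there is no substantive difference.
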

\begin{proof}
	Suppose that there is a unique minimal node in $\minf(w)$, called $n$, and that $n$ is "round". We claim that $w$ is accepted by the "Rabin pair" $(G_n,R_n)$. It is clear that $\minf(w)\cap G_n\neq \emptyset$, because $n\in G_n$. It suffices to show that $\minf(w)\cap R_n = \emptyset$: By minimality, any other node $n'\in \minf(w)$ is a descendant of $n$ (equivalently, $n$ is an ancestor of $n'$), so $n'\notin R_n$.
	
	Conversely, suppose that $w\in N^\oo$ satisfies the "Rabin condition". Then, there is some "round node" $n\in \mroundnodes$ such that $\minf(w)\cap G_n \neq \emptyset$ and $\minf(w)\cap R_n = \emptyset$. Since $G_n = \{n\}$, we deduce that $n\in \minf(w)$. Moreover, as $\minf(w)\cap R_n = \emptyset$, all nodes in $\minf(w)$ are descendants of $n$. We conclude that $n$ is the unique minimal node in $\minf(w)$, and it is "round".
\end{proof}

\begin{lemma}\label{lemma: Rabin recognises L_F and GFG}
	The automaton $\RF$ recognises the language $"\L_\F"$ and is "good-for-games".
\end{lemma}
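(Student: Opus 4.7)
My plan is to prove the two claims of the lemma jointly by constructing an explicit resolver and simultaneously bounding all non-deterministic runs. The resolver mirrors the deterministic Zielonka parity automaton of~\cite{CCF21Optimal}: it maintains a current leaf $l$ of $\zielonkatree_\F$, starts at the leftmost leaf (whose $\eta$-image is the initial state $1$), and, upon reading $c$, takes the transition $(\eta(l),c,n,\eta(l'))$ of $\RF$ attached to $l$ (with $n$ the maximal "ancestor" of $l$ containing $c$ and $l'$ the leftmost leaf of $\mjump_n(l)$), updating the current leaf to $l'$. This produces a well-defined run on every input word, whose output sequence consists of the successive last common "ancestors" along the leaf trajectory.

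The key dynamical claim is: for every word $w$, the canonical run's output sequence admits a unique minimum node $n^*$ in its $\minf$ set, and $\nu(n^*)\in\F$ iff $\minf(w)\in\F$. I would prove this by induction on the "height" of $\zielonkatree_\F$. Fixing the "root" $r$ of the current subtree, after finitely many steps the leaf trajectory either stays inside the subtree rooted at some "child" $m$ of $r$ (which happens exactly when $\minf(w)\subseteq\nu(m)$, since any letter outside $\nu(m)$ forces a jump up to a proper "ancestor" of $m$), in which case the induction hypothesis applied to $\zielonkatree_{\F|_{\nu(m)}}$ yields the required $n^*$; or the trajectory returns to $r$ infinitely often via the cyclic $\mjump$ rule, so $n^* = r$, and a case analysis using that the children of $r$ carry the maximal subsets of $\nu(r)$ not in $\F$ (resp.\ in $\F$) when $r$ is "round" (resp.\ "square"), together with $\minf(w)\not\subseteq\nu(m)$ for every child $m$, shows $\nu(r)\in\F$ iff $\minf(w)\in\F$. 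Combined with \Cref{lemma: acc-sequences-nodes-ZT}, the canonical run is accepting iff $w\in\L_\F$, yielding both $\L_\F\subseteq L(\RF)$ and the "good-for-gameness" of $\RF$ (the resolver produces an accepting run on every accepted word).

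For the converse inclusion $L(\RF)\subseteq\L_\F$, I would analyse an arbitrary accepting run $\rho$ on $w$. By \Cref{lemma: acc-sequences-nodes-ZT} there is a unique minimum "round" node $n^*$ in the $\minf$ of its outputs. Every transition whose output $n$ lies in the subtree of $n^*$ reads a letter in $\nu(n)\subseteq\nu(n^*)$, so $\minf(w)\subseteq\nu(n^*)$. It remains to show $\minf(w)\not\subseteq\nu(m)$ for every child $m$ of $n^*$: combined with $n^*$ being "round" and its children carrying the maximal subsets of $\nu(n^*)$ not in $\F$, this forces $\minf(w)\in\F$ and hence $w\in\L_\F$. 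I expect this to be the main obstacle, since the non-determinism induced by $\eta$ identifying distinct leaves means that there is no canonical "current leaf" associated with $\rho$. The plan is to lift $\rho$ to some sequence of leaves in $\zielonkatree_\F$ compatible with its transitions (each transition of $\RF$ corresponds to at least one such leaf) and to show, by a dynamical analysis inside the subtree of $n^*$ analogous to the one above, that $\minf(w)\subseteq\nu(m)$ for some child $m$ of $n^*$ would eventually confine the lifted trajectory to the subtree rooted at $m$, precluding the infinitely many outputs at $n^*$ demanded by the choice of $n^*$.
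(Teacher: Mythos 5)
Your overall architecture coincides with the paper's: the same resolver (tracking a current leaf of $\zielonkatree_\F$, i.e.\ the Zielonka-tree parity automaton, started at the leftmost leaf), the same use of \Cref{lemma: acc-sequences-nodes-ZT} to translate output sequences into acceptance, and the same split into the two inclusions. Your treatment of $\L_\F\subseteq \L(\RF)$ and good-for-gameness is sound and essentially the paper's argument (the paper phrases it via the leftmost $\ancestor$-maximal node $n$ with $\minf(w)\subseteq\nu(n)$ rather than via your induction on height, but the content is the same).

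The gap is in the direction $\L(\RF)\subseteq\L_\F$, precisely at the step you flag as the main obstacle, and your proposed fix does not close it. Lifting the accepting run $\rho$ transition-by-transition to leaves does not produce a coherent trajectory: the target leaf of one transition and the source leaf of the next agree only in their $\eta$-image, and a priori two leaves with the same $\eta$-label could lie below \emph{different} children of $n^*$. In that case your confinement claim fails: from a state shared by a leaf below a child $m$ and a leaf below another child $m'$, reading a letter $c\in\nu(m)\setminus\nu(m')$ through the second leaf produces $n^*$ as output (the deepest ancestor of that leaf containing $c$ is $n^*$), so a run could output $n^*$ infinitely often while reading only letters of $\nu(m)\notin\F$, and the inclusion would be false. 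What rules this out is exactly Property~\ref{Eq_property-star}, which your proposal never invokes in this direction --- yet it is the one place in the whole proof where ($\star$) is needed, and without it the lemma is simply wrong (take an $\eta$ collapsing all leaves to a single state). The paper makes this precise by working with the sets $Q_i=\{\eta(l) : l \text{ a leaf below } n_i\}$ for the children $n_1,\dots,n_k$ of $n^*$: by ($\star$) they are pairwise disjoint, so the \emph{state} alone determines the child subtree once all outputs (hence all corresponding leaves) are descendants of $n^*$; letters in $\nu(n_i)$ keep the run in $Q_i$, letters of $\nu(n^*)\setminus\nu(n_i)$ move it cyclically to $Q_{i+1}$ while outputting $n^*$, hence if $\minf(w)\subseteq\nu(n_j)$ for some $j$ the run eventually stabilises in some $Q_i$ with $\minf(w)\subseteq\nu(n_i)$, after which $n^*$ can no longer be output --- contradicting the choice of $n^*$. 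To repair your argument you must insert exactly this consequence of ($\star$) (among leaves below $n^*$, the $\eta$-label determines the child), after which your confinement-and-contradiction plan goes through.
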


\begin{proof}
	$\L(\RF)\subseteq "\L_\F"$: Let $u\in \L(\RF)$ and let $w\in N^\oo$ be the sequence of nodes produced as output of an accepting run over $u$ in $\RF$. By \Cref{lemma: acc-sequences-nodes-ZT}, there is a unique  minimal node $n$ for $\ancestor$ appearing infinitely often in $w$ 
	and moreover $n$ is "round".
	Let $n_1, \dots, n_k$ be an enumeration of the children of $n$ (from left to right), with labels $\nu(n_i)\subseteq \GG$ (we remark that $\nu(n_i)\notin \F$, for $1\leq i \leq k$). We will prove that $\minf(u)\subseteq \nu(n)$ and $\minf(u)\nsubseteq \nu(n_i)$ for $1\leq i \leq k$. By definition of the "Zielonka tree", as $n$ is round, this implies that $\minf(u)\in \F$.

	Since eventually all nodes produced as output are descendants of $n$ (by minimality), $\minf(u)$ must be contained in $\nu(n)$ (by definition of the transitions of $\RF$).
	
	We suppose, towards a contradiction, that $\minf(u)\subseteq \nu(n_j)$ for some $1\leq j \leq k$.  
	Let $Q_i=\{\kl{\eta}(l)\: : \: l \text{ is a leaf below } n_i\}$ be the set of states corresponding to leaves under $n_i$, for $1\leq i \leq k$. We can suppose that the "leaves" "corresponding to" transitions of an accepting run over $u$ are all below $n$, and therefore, transitions of such a run only visit states in $\bigcup_{i=1}^kQ_i$. Indeed, eventually this is going to be the case, because if some of the leaves $l, l'$ "corresponding to" a transition $(q,a,n',q')$ are not below $n$, then $n'$ would not be a descendant of $n$ (since $n'$ is the least common ancestor of $l$ and $l'$).
	Also, by Property~\ref{Eq_property-star}, we have $Q_i\cap Q_j = \emptyset$, for all $i\neq j$.
	By definition of the transitions of $\RF$, if $c\in \GG$ is a colour in $\nu(n)$ but not in $\nu(n_i)$, all transitions from some state in $Q_i$ reading the colour $c$ go to $Q_{i+1}$, for $1\leq i \leq k-1$ (and to $Q_1$ if $i=k$).
	Also, if $c\in \nu(n_i)$, transitions from states in $Q_i$ reading $c$ stay in $Q_i$.	
	We deduce that a run over $u$ will eventually only visit states in $Q_j$, for some $j$ such that $\minf(u)\subseteq \nu(n_j)$. However, the only transitions
	from $Q_j$ that would produce $n$ as output are those corresponding to a colour $c\notin \nu(n_j)$, so the node $n$ is not produced infinitely often, a contradiction.

	\textbf{$"\L_\F"\subseteq \L(\RF)$ and good-for-gameness}: We will describe a strategy for a "resolver" in~$\RF$ using as memory the set of leaves of the Zielonka tree\footnote{This strategy is given by the (deterministic) Zielonka tree parity automaton $\P_\F$. It suffices to note that there is a morphism from $\P_\F$ to $\RF$ preserving all edges and the acceptance of loops.}. It will verify at every step that if the memory is on the leaf $l$, then $\RF$ is on the state $\kl{\eta}(l)$. The initial state of the memory is the leftmost leaf of $\zielonkatree_\muller$. If we are on the memory state $l$ and the letter $c\in\GG$ is read, we take the transition $(\kl{\eta}(l),c,n, \kl{\eta}(l'))\in \DD$, where $n$ is the maximal ancestor of $l$ such that $c\in \nu(n)$  and  $l'$ is the leftmost leaf in $\mjump_n(l)$; the memory state is updated to $l'$. Let us suppose that a word $u\in \L_\F$ is given as input to the automaton. We will see that the run produced by this strategy is accepting. We can suppose that the only colours appearing in $u$ are those of $\minf(u)$. Let $n$ be the leftmost $\ancestor$-maximal node such that $\minf(u)\subseteq \nu(n)$. Since $\minf(u)\in \F$, $n$ is a "round" node. We will prove that the run produced by the "resolver" above only produces nodes that are descendants of $n$ (including $n$) infinitely often and that it produces $n$ infinitely often and is therefore accepting. Let $n_1, \dots, n_k$ be the children of $n$ from left to right, and let $L_1, \dots, L_k$ be the (disjoint) sets of leaves below them, respectively. By the definition of the transitions and the strategy, the memory will eventually only consider "leaves" in $\bigcup_{i=1}^kL_i$, and will produced as output nodes that are "descendants" of $n$ (including $n$ itself). Also, each time that the memory is in some state in $L_i$ and a colour not in $\nu(n_i)$ is given, a transition leading to some state in $\kl{\eta}(L_{i+1})$ ($\kl{\eta}(L_{1})$ if $i=k$) producing the node $n$ as output is taken. Since $\minf(u)$ is not contained in $\nu(n_i)$ for $1\leq i \leq k$ (by the maximality assumption), this occurs infinitely often.
\end{proof}

\begin{remark}
	We have shown that given as input the "Zielonka tree" of a "Muller condition" $\F$ we can build in polynomial time a minimal "GFG" Rabin automaton for $"\L_\F"$. On the other hand, with the same input, it is $\NPc$ to decide whether there is a "deterministic" Rabin automaton of size $k$ recognising $\L_\F$ \cite[Theorem 31]{Casares2021Chromatic}. Therefore, unless $\mathtt{P}=\NP$, there are "Muller languages" for which minimal "deterministic" Rabin automata are strictly greater than minimal "GFG" Rabin automata. We will explicitly show some of these languages in Section~\ref{section:succinctness}.
\end{remark}

\subsubsection{Relation with the Zielonka-tree parity automaton}\label{subsubsection: relationWithZielonkaParity}

The "Zielonka tree" has been previously used to provide a minimal "deterministic" "parity automaton" for a "Muller condition"~\cite{CCF21Optimal, MeyerSickert21OptimalPractical}. The automata states are the "leaves" of the "Zielonka tree", and the transition from a leaf $l$ reading colour $c$ goes to the leftmost leave in $\mjump_n(l)$, where $n$ is the last ancestor of $l$ containing colour $c$. For example, \Cref{Fig_parity-automaton} shows a parity automaton recognising the "Muller condition" $\F = \{\{a,b\}, \{a,c\}, \{b\}\}$ from \Cref{example: Zielonka-tree}.

\begin{figure}[ht]
	\centering		
	\begin{tikzpicture}[square/.style={regular polygon,regular polygon sides=4}, align=center,node distance=2cm,inner sep=2pt]
	
	\node at (0,2) [state] (1navy) {$\delta$, \textcolor{Navy}{\textbf{1}}};
	\node at (3,2) [state] (1green) {$\ee$, \textcolor{Green3}{\textbf{1}}};
	\node at (6,2) [state] (2) {$\zeta$, \textbf{2}};
	
	\path[->] 
	(1navy)  edge [out=170,in=130,loop] node[left] {$a:0$ }   (1navy)
	(1navy)  edge [in=230,out=190,loop] node[left] {$b:1$ }   (1)
	(1navy)  edge [in=150,out=30] node[above] {$c:2$ }   (1green)
	
	(1green)  edge [out=110,in=70,loop] node[above] {$a:0$ }   (1green)
	(1green)  edge [in=-30,out=210] node[above] {$b:2$ }   (1navy)
	(1green)  edge [in=150,out=30] node[above] {$c:1$ }   (2)

	(2)  edge [in=-30,out=210] node[above] {$a:1$ }   (1green)
	(2)  edge [in=-40,out=-140, color=black] node[above] {$b:2$ }   (1navy)
	(2)  edge [out=110,in=70,loop] node[above] {$c:0$ }   (2);
	
	\end{tikzpicture}
	\caption{Parity automaton $\P_\F$ obtained from the Zielonka tree from \Cref{Fig: ZielonkaTree-example}.}
	\label{Fig_parity-automaton}
\end{figure}

This minimal parity automaton $\P_\F$ is closely related to the "GFG" "Rabin automaton" $\R_{\F}$
presented in \Cref{subsection:GFGforMuller}. More precisely, the automaton $\R_{\F}$ can be regarded as a quotient of $\P_\F$ given by the numbering $\kl{\eta}\colon \mleaves(\zielonkatree_\F) \to \{1,\dots,\memtree(\zielonkatree_\F)\}$. That is, to obtain $\R_{\F}$ we merge the states $\kl{\eta}^{-1}(i)$ for $1\leq i \leq \memtree(\zielonkatree_\F)$ and we keep all transitions. Moreover, the strategy for a resolver for $\R_{\F}$ as presented in the proof of \Cref{lemma: Rabin recognises L_F and GFG} is exactly given by the deterministic automaton $\P_\F$. However, we note that in general a "parity condition" is not sufficient in $\RF$ to accept $\L_\F$ and we need to replace it by a "Rabin" one.

We observe that the "GFG" "Rabin automaton" from \Cref{Fig_GFGRabin} is obtained as a quotient of the deterministic parity automaton in \Cref{Fig_parity-automaton}.

\subsubsection{Simplifications and optimisations}

\AP Given an automaton $\A=(Q, \Sigma, Q_0, \DD, \GG, \WW)$ we say that it has ""duplicated edges""  if there are some pair of states $q,q'\in Q$ and two different transitions between them labelled with the same input letter: $(q,a,\aa,q'),(q,a,\bb,q')\in \DD$.

As remarked previously, the construction we have presented provides an automaton potentially having "duplicated edges", which can be seen as an undesirable property (even if some automata models such as the HOA format~\cite{BBDKKMPS2015HOAFormat} allow them). We show next that we can always derive an equivalent automaton without "duplicated edges". Intuitively, in the Rabin case, if we want to merge two transitions having as output letters $\aa$ and $\bb$, we add a fresh letter $(\aa\bb)$ to label the new transition. For each "Rabin pair", this new letter will simulate the best of either $\aa$ or $\bb$ depending upon the situation.

\begin{proposition}[Simplification of automata]\label{prop: simplification_Rabin}
	Let $\A$ be a "Muller" (resp. "Rabin") automaton presenting "duplicated edges". There exists an equivalent "Muller" (resp. "Rabin") automaton $\A'$ on the same set of states without "duplicated edges". Moreover, if $\A$ is "GFG", $\A'$ can be chosen "GFG". In the Rabin case, the number of "Rabin pairs" is also preserved. 
\end{proposition}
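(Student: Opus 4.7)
The plan is to eliminate "duplicated edges" one pair at a time by local surgery on transitions, while leaving the state set untouched: given two transitions $e_1 = (q, a, \alpha, q')$ and $e_2 = (q, a, \beta, q')$ sharing endpoints and input letter but differing in output colour, I would replace them by a single transition $(q, a, \gamma, q')$, where $\gamma$ is a fresh output symbol encoding a ``best-of'' between $\alpha$ and $\beta$. Each such merge strictly decreases the number of pairs of duplicated transitions, so iterating terminates and produces the desired $\A'$.

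For the "Rabin" case, I would keep the index set of "Rabin pairs" $\{(G_i, R_i)\}_{i=1}^r$ and extend each pair by the rule: $\gamma \in R_i$ iff both $\alpha, \beta \in R_i$; otherwise $\gamma \in G_i$ iff $\alpha \in G_i$ or $\beta \in G_i$. Disjointness $G_i \cap R_i = \emptyset$ is automatic, and the pair count is unchanged. For $L(\A) \subseteq L(\A')$, every "accepting run" of $\A$ witnessed by pair $i$ projects to a run of $\A'$ in which each $e_j$-visit becomes a $\gamma$-visit, and such a visit is red in pair $i$ only when both constituents were red, and green as soon as either was green; hence the projected run visits $R_i$ only finitely often and $G_i$ infinitely often. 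For the reverse inclusion, a run of $\A'$ accepting via pair $i$ expands to a run of $\A$ by choosing, at each $\gamma$-visit, a constituent in $G_i$ when $\gamma \in G_i$ and, otherwise, any constituent not in $R_i$, which exists because $\gamma \notin R_i$ means that $\alpha, \beta$ are not both in $R_i$. "GFG" preservation is immediate: a "resolver" for $\A$ induces one for $\A'$ by mapping its old choice between $e_1$ and $e_2$ to the unique $\gamma$-edge, and the resulting projection of an "accepting run" remains accepting by the same argument.

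The "Muller" case is analogous: I would define $\muller' \subseteq \Pplus(\Gamma \cup \{\gamma\})$ to contain $X$ iff there exists $Y \in \muller$ with $X \cap \Gamma \subseteq Y \subseteq (X \cap \Gamma) \cup \{\alpha, \beta\}$ and $Y \cap \{\alpha, \beta\} \neq \emptyset$ whenever $\gamma \in X$. This captures precisely the $\minf$-sets of runs of $\A$ obtained by expanding each $\gamma$-transition to one of its constituents, and the projection/expansion pairing used above gives language equivalence and preservation of "GFG"-ness in the same way. The step I expect to be the principal obstacle is the Rabin back-translation: the expansion of $\gamma$-visits must witness one fixed pair $i$ simultaneously for ``green infinitely often'' and ``red finitely often'', which is precisely what forces the asymmetric ``all vs.\ some'' treatment of $R_i$ versus $G_i$ in the extension rule. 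Once a target pair is fixed, the deterministic local rule ``pick green when possible, else a non-red constituent'' delivers such a uniform witness.
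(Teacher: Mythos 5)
Your Rabin case is correct and is essentially the paper's own argument: one fresh output colour per merged transition, declared green for pair $i$ as soon as some constituent is green and red only when all constituents are red, with the same run-projection/expansion translation and the same induced resolver for preservation of good-for-gameness (the paper merges all duplicates of a given state-pair and letter in one step rather than pairwise, but that is immaterial).

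The Muller case, however, has a genuine gap. Your condition puts $X$ in $\muller'$ whenever some $Y\in\muller$ satisfies $X\cap\Gamma\subseteq Y\subseteq (X\cap\Gamma)\cup\{\alpha,\beta\}$, with the extra requirement $Y\cap\{\alpha,\beta\}\neq\emptyset$ only when $\gamma\in X$. When $\gamma\notin X$, i.e.\ the merged transition is taken only finitely often, no expansion of the run can add $\alpha$ or $\beta$ to its set of colours seen infinitely often, so the correct requirement in that case is simply $X\in\muller$; allowing $Y$ to strictly contain $X\cap\Gamma$ makes $\muller'$ too permissive and breaks $L(\A')\subseteq L(\A)$. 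Concretely, take a single state $q$, input letters $a,b$, transitions $(q,a,\alpha,q)$, $(q,a,\beta,q)$, $(q,b,c,q)$, and $\muller=\{\{\alpha,c\}\}$: then $L(\A)$ consists of the words containing infinitely many $a$ and infinitely many $b$, but after merging the two $a$-loops into a $\gamma$-loop your $\muller'$ contains $\{c\}$ (witnessed by $Y=\{\alpha,c\}$, no constraint since $\gamma\notin\{c\}$), so $\A'$ accepts $b^\omega\notin L(\A)$. So your claim that the condition captures precisely the $\mathit{Inf}$-sets of expansions fails exactly in the $\gamma\notin X$ case. The fix is easy (require $X\in\muller$ when $\gamma\notin X$, and keep your clause only when $\gamma\in X$), and the paper sidesteps the issue altogether by phrasing acceptance of $\A'$ on sets of transitions: a set $T$ of transitions occurring infinitely often is accepting iff one can choose a non-empty set $S_t$ of original constituents for each $t\in T$ such that $\bigcup_{t\in T}S_t$ is accepting in $\A$, which by construction lets only transitions genuinely taken infinitely often contribute colours.
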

\begin{proof}
	For the Rabin case, let $\A'$ be an automaton that is otherwise as $\A$ except that instead of the transitions $\Delta$  of $\A$ it only has one $a$-transition $q \xrightarrow{a:x} q'\in \Delta'$ (with a fresh colour $x$ per transition) per state-pair $q,q'$ and letter $a\in \Sigma$. That is, $\DD'=\{(q,a,x_j,q') \: : \: (q,a,y,q')\in \DD \text{ for some } y\}$. The new "Rabin condition" $\{(G_1',R_1'),\ldots, (G_r',R_r')\}$ is defined as follows. For each transition $q\xrightarrow{a:x} q'$:
	\begin{itemize}
		\item $x\in G_i'$ if $q\xrightarrow{a:y} q'\in \Delta$ for some $y\in G_i$ (there is a "green transition" for the $i^{th}$ pair)
		\item  $x\in R'_i$ if for all $q\xrightarrow{a:y} q'\in \Delta$,  $y\in R_i$ (there is no "green" or "orange transition" for the $i^{th}$ pair).
	\end{itemize}
	
	We claim that $L(\A')=L(\A)$. Indeed, if $u\in L(\A)$, as witnessed by some run $\rho$ and a "Rabin pair" $(G_i,R_i)$, then the corresponding run $\rho'$ in $\A'$ over $u$ is also accepting with Rabin pair $(G_i',R_i')$: the transitions of $\minf(\rho)\cap G_i$ induce transitions of $\minf(\rho')\cap G_i'$ and the fact that $\minf(\rho)\cap R_i=\emptyset$ guarantees that $\minf(\rho')\cap R_i'=\emptyset$.
	
	Conversely, if $u\in L(\A')$ as witnessed by a run $\rho'$ and Rabin pair $(G_i', R_i')$,
	then there is an accepting run $\rho$ over $u$ in $\A$: such a run can be obtained by choosing for each transition $q\xrightarrow{a:x}q'$ of $\rho'$ where $x\in G_i'$ a transition $q\xrightarrow{a:y}q'\in \Delta$ such that $y\in G_i$, which exists by definition of $\A'$, for each transition $q\xrightarrow{a:x}q'$ where $x\notin G_i\cup R_i$ a transition $q\xrightarrow{q,y}q'\in\Delta$ such that $y\notin R_i$, which also exists by definition of $\A'$, and for other transitions $q\xrightarrow{a:x}q'$ (that is, those for which $x\in R_i'$) an arbitrary transition $q\xrightarrow{a:y}q'\in \Delta$. Since $\rho'$ is accepting, we have $\minf(\rho')\cap G_i\neq \emptyset$ and $\minf(\rho)\cap R_i=\emptyset$, that is, $\rho$ is also accepting. 
	
	For the "Muller" case, the argument is even simpler. As above, we consider $\A'$ that is otherwise like $\A$ except that instead of the transitions $\Delta$ of $\A$, it only has one $a$-transition $q\xrightarrow{a:x}q'\in \Delta'$ (with a fresh colour per transition) per state-pair $q,q'$ and the accepting condition is defined as follows. A set of  transitions $T$ is accepting if and only if for each $t=q\xrightarrow{a:x}q'\in T$ there is a non-empty set $S_t \subseteq \{ q\xrightarrow{a:y}q'\in \Delta\}$ such that $\bigcup_{t\in T} S_t$ is accepting in $\A$. In other words, a set of transitions in $\A'$ is accepting if for each transition we can choose a non-empty subset of the original transitions in $\A$ that form an accepting run in $\A$.
	
	We claim that $L(\A')=L(\A)$. Indeed if $u\in L(\A)$, as witnessed by some run $\rho$, the run $\rho'$ that visits the same sequence of states in $\A'$ is accepting as witnessed by the transitions that occur infinitely often in $\rho$.
	
	Conversely, assume $u\in L(\A')$, as witnessed by a run $\rho'$ and a non-empty subset $S_t$ for each transitions $t$ that occurs infinitely often in $\rho'$ such that $\bigcup_{t\in \minf(\rho)} S_t$ is accepting in $\A$. Then there is an accepting run $\rho$ over $u$ in $\A$ that visits the same sequence of states as $\rho'$ and chooses instead of a transition $t\in \minf(\rho)$ each transition in $S_t$ infinitely often, and otherwise takes an arbitrary transition. The set of transitions $\rho$ visits infinitely often is exactly $\bigcup_{t\in\minf(\rho)} S_t$, and is therefore accepting.
	
	Finally, observe that in both cases, if $\A$ if "GFG", then the automaton $\A'$ without duplicate edges is also "GFG" since $\A'$ is obtained from $\A$ by merging transitions. Indeed, the "resolver" $r$ of $\A$ induces a "resolver" $r'$ for $\A'$ by outputting the unique transition with the same letter and state-pair as $r$. By the same argument as above, the run induced by $r'$ is accepting if and only if the run induced by $r$ is.
\end{proof}


\begin{example}
	The "GFG" "Rabin automaton" from Figure~\ref{Fig_GFGRabin} has "duplicated transitions". In Figure~\ref{Fig_GFGRabin-simplified} we present an equivalent "GFG" "Rabin automaton" without duplicates. For this, we have merged the self-loops in state $1$ labelled with $a$ and $b$ respectively. We have added the output letters $(\aa\bb)$ and $(\delta\ee)$. 
	The new "Rabin pairs" are given by:
	
	\centering
	\begin{tabular}{l l}
		$G_\bb' = \{\bb, (\aa\bb)\}$, & $R_\bb'= \{\aa, \gg, \ee, \zeta\}$,\\ 
		$G_\gg' = \{\gg\}$, & $R_\gg'= \{\aa, \bb, (\aa\bb), \dd\}$. 
	\end{tabular}
	\begin{figure}[ht]
		\centering		
		\begin{tikzpicture}[square/.style={regular polygon,regular polygon sides=4}, align=center,node distance=2cm,inner sep=2pt]
		
		\node at (0,2) [state] (1) {$1$};
		\node at (3,2) [state] (2) {$2$};
		
		\path[->] 
		(1)  edge [out=250,in=290,loop] node[left] {$a:\textcolor{Violet2}{(\delta\ee)}$ }   (1)
		(1)  edge [in=160,out=200,loop] node[left] {$b:\textcolor{Violet2}{(\aa\bb)}$ }   (1)
		(1)  edge [in=70,out=110,loop] node[above] {$c:\textcolor{Violet2}{\aa}$ }   (1)
		
		(1)  edge [in=210,out=-30] node[below] {$c:\textcolor{Violet2}{\gg}$ }   (2)
		
		(2)  edge [color=black] node[above] {$a:\textcolor{Violet2}{\gg}$ }   (1)
		(2)  edge [in=30,out=150, color=black] node[above] {$b:\textcolor{Violet2}{\aa}$ }   (1)
		(2)  edge [in=-20,out=20,loop, color=black] node[right] {$c:\textcolor{Violet2}{\zeta}$ }   (1);
		
		\end{tikzpicture}
		\caption{ The simplified GFG Rabin automaton.}
		\label{Fig_GFGRabin-simplified}
		
	\end{figure}
	
\end{example}

\begin{remark}[Optimisation on the number of Rabin pairs]
	An important parameter in the study of Rabin automata is the number of "Rabin pairs" used. The automaton presented in this Section uses a number of Rabin pairs that equals the number of "round nodes" in the "Zielonka tree". This can be improved by using only the "round nodes" in the Zielonka directed acyclic graph (obtained from the tree by merging nodes with the same labels). However, even this latter option is not always optimal and we conjecture that minimising the number of Rabin pairs without increasing the size of the automaton is $\NPc$.
\end{remark}


	
	\section{GFG Rabin automata recognising Muller conditions can be exponentially more succinct than deterministic ones in number of states}
	\label{section:succinctness}

On his PhD Thesis \cite{Kopczynski2006Half,Kopczynski2008PhD}, Kopczyński raised the question of whether the "general" and the "chromatic" memory requirements of winning conditions always coincide. By \Cref{Th_ GFGRabin=Memory} and \cite[Theorem 28]{Casares2021Chromatic}, in the case of Muller conditions, this question is equivalent to the following:

\begin{quote}
	Is there a "Muller language" $L$ such that minimal "GFG Rabin automata" recognising $L$ are strictly smaller than "deterministic" Rabin automata for $L$?
\end{quote}
In \cite{Casares2021Chromatic} this question is answered positively. It is shown that for every $n\in\NN$ there is a "Muller language" $L_n$ over an alphabet $\GG_n$ such that a minimal GFG Rabin automaton for it has size $2$, but a minimal deterministic Rabin automaton for it has size $n$. However, the size of the alphabet $\GG_n$ in that example also has size $n$. A natural question is whether GFG Rabin automata recognising "Muller conditions" can be exponentially more succinct than deterministic ones, when also taking into account the alphabet size. This is indeed the case:

\begin{theorem}\label{Th: GFGRabin-Exp-Succinct}
		There exists a constant $\aa>1$, a sequence of natural numbers $n_1<n_2<n_3\dots$ and a sequence of Muller conditions $\F_{n_i}$ over $\GG_{n_i}=\{1,\dots, n_i\}$ such that 
		\begin{itemize}
			\item a minimal "GFG" "Rabin automaton" for $\L_{\F_{n_i}}$ has size $\lfloor n_i/2 \rfloor$,
			\item a minimal deterministic Rabin automaton for $\L_{\F_{n_i}}$ has size at least $\aa^{n_i}$.
		\end{itemize} 
	A lower bound for such a constant is $\aa= 1.116$.
\end{theorem}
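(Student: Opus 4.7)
The plan is to construct an explicit family of "Muller conditions" $\F_n$ whose two complexity measures diverge: by \Cref{Th_ GFGRabin=Memory} the smallest "GFG" "Rabin automaton" has size $\memtree(\zielonkatree_{\F_n})$, while by \cite[Theorem 28]{Casares2021Chromatic} the smallest "deterministic" "Rabin automaton" has size equal to the "chromatic memory" for $\F_n$. The target value $\alpha \geq 1.116$ is barely below $3^{1/10} \approx 1.1161$, which strongly suggests a block construction with ten-letter blocks each contributing a multiplicative factor of three to the chromatic memory and an additive $5$ to the Zielonka-tree memory.

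Concretely, I would first exhibit a small ``gadget'' "Muller condition" $\F^{\ast}$ over an alphabet of size $10$ with $\memtree(\zielonkatree_{\F^{\ast}}) = 5$ and "chromatic memory" at least $3$. The Zielonka-tree parameter is directly computable from the shape of the tree via the DJW recurrence (\Cref{prop: optimalMemoryDJW}). The chromatic lower bound would be obtained by associating to $\F^{\ast}$ the graph of \cite{Casares2021Chromatic} (vertices are candidate memory states and edges encode a forbidden-identification relation coming from "$L$-games"), and exhibiting an odd cycle in it to rule out two-colourability. Existence of such an $\F^{\ast}$ is plausible because \cite{Casares2021Chromatic} already provides, for any $m$, a Muller condition whose general "memory" is $2$ but whose "chromatic memory" is $m$, so the slack is ample for our modest requirement of $3$ versus $5$.

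Given $\F^{\ast}$, define $\F_n$ for $n = 10k$ as the disjoint composition of $k$ independent copies of $\F^{\ast}$ on pairwise disjoint alphabets, arranged under a "round" "root" in the "Zielonka tree". The summation rule for "round" roots in the DJW formula yields $\memtree(\zielonkatree_{\F_n}) = 5k = \lfloor n/2 \rfloor$, which by \Cref{prop: Construction_GFGRabin} and \Cref{corollary:automata>=memory} is exactly the size of a smallest "GFG" "Rabin automaton" for $\Lang{\F_n}$. For the deterministic Rabin lower bound, the key claim is that "chromatic memory" is multiplicative under this disjoint composition: intuitively, a single "chromatic" update function for $\F_n$ must, when restricted to letters of one block, implement a valid "chromatic memory" for that block, and since the blocks are independent the global structure must track all combinations, giving at least $3^k = 3^{n/10} \geq \alpha^n$ states.

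The principal obstacle will be the multiplicativity step. Graph products do not in general preserve chromatic numbers (Shitov has recently disproved Hedetniemi's conjecture for the categorical product), so the bound cannot be imported from generic graph theory. Instead, one must exploit the rigid structure of the composition: by \cite[Theorem 28]{Casares2021Chromatic} it suffices to construct, for each $k$, a single $\Lang{\F_n}$-"game" that actually forces $3^k$ chromatic memory states, and such a "game" can be assembled by taking a suitable product of the $k$ hard games witnessing the chromatic lower bound for $\F^{\ast}$. Once multiplicativity is established, the arithmetic $3^{1/10} \geq 1.116$ concludes the argument, and one may at this point tune the block size to optimise $\alpha$ further.
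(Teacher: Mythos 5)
You leave the two load-bearing steps unproven, and the more important of them is in fact false as stated. The decisive issue is the claimed multiplicativity of chromatic memory (equivalently, of minimal deterministic Rabin automata) under your disjoint block composition. If $\F_n$ is built from $k$ copies of a gadget $\F^{\ast}$ on pairwise disjoint ten-letter blocks so that the Zielonka tree of $\F_n$ is a round root whose subtrees are the $k$ gadget trees, then every set of letters meeting at least two blocks is accepting (no child label contains such a set, so it is classified at the round root). A deterministic Rabin automaton can exploit exactly this: it stores only the index of the block of the last letter read together with a state of a fixed deterministic Rabin automaton $\A^{\ast}$ for the gadget, resets $\A^{\ast}$ to its initial state whenever the input letter changes block (emitting a fresh colour $s$ on that transition), and accepts via the extra Rabin pair $(\{s\},\emptyset)$ or via the gadget pairs $(G_j,\, R_j\cup\{s\})$. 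Prefix-independence of Muller languages makes the reset harmless: if only one block occurs infinitely often, there are finitely many resets and the tail simulation of $\A^{\ast}$ decides the gadget condition; if at least two blocks recur, the word is accepted outright through $(\{s\},\emptyset)$. This automaton has $k\cdot|\A^{\ast}|$ states, so the deterministic Rabin complexity (hence, by \cite[Theorem 28]{Casares2021Chromatic}, the chromatic memory) of your composed family grows linearly in $n$, not exponentially, and no assembly of hard product games can force $3^{k}$ states. The same phenomenon is visible in the colouring view: rejecting sets lying in distinct blocks are all adjacent, so the relevant graph is a join of the $k$ block graphs and its chromatic number is the sum, not the product, of the block values. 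In addition, the existence of a ten-letter gadget with $\memtree=5$ and chromatic memory at least $3$ is only asserted as plausible, and the numerical coincidence $3^{1/10}\approx 1.116$ is not evidence for a block structure: the paper's constant arises from a ratio of binomial coefficients.

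The paper avoids any product structure altogether. It takes the single condition $\F_n=\{C\subseteq\GG_n \,:\, |C|=\lfloor n/2\rfloor\}$, whose Zielonka tree immediately yields a GFG Rabin automaton of size $\lfloor n/2\rfloor$, and proves the deterministic lower bound by a cycle argument: in a deterministic Rabin automaton, rejecting cycles sharing a state have rejecting union, so final strongly connected components for two rejecting letter sets whose union is accepting must be disjoint. This reduces the bound to the chromatic number of an intersection-type graph on subsets of $\GG_n$, which is shown to be exponential using a Frankl--Wilson-style theorem of \cite{MubayiRodl2014Intersections}. Some genuinely combinatorial input of this kind, applied to a condition where the hard interactions cannot be switched off by mixing blocks, is what your sketch is missing.
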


%

We devote the rest of this Section to proving Theorem~\ref{Th: GFGRabin-Exp-Succinct}. In brief, the "Muller conditions" in question require half the colours to be seen infinitely often. The construction of the small "GFG Rabin automaton" follows from constructing the "Zielonka tree" of the condition. For the lower bound on the deterministic Rabin automaton, we reduce the problem to finding a lower bound on the "chromatic number" of a certain graph, which we finally show to be sufficiently large for a family of our Muller conditions.

Let $n\in\NN$. We define the following Muller condition over $\GG_n=\{1,\dots,n\}$:
\[ \F_n=\{ C\subseteq \GG_n \: : \: |C|= \lfloor n/2 \rfloor\}. \]
The Zielonka tree of $\F_n$ is depicted in Figure~\ref{Fig: ZielonkaTree-F_n} (for $n$ even).

\begin{figure}[ht]
	\centering
	\begin{tikzpicture}[square/.style={regular polygon,regular polygon sides=4}, align=center,node distance=2cm,inner sep=3pt]
	
	\node at (0,2.6) [draw, rectangle, minimum height=0.9cm, minimum width = 1.5cm] (R) {$1,2, \dots, n$};
	
	\node at (-5,1.3) [draw, ellipse,minimum height=0.8cm, minimum width = 1.3cm, scale=0.9] (0) {$1,2,\dots, \frac{n}{2}$};
	\node at (-1,1.3) [draw, ellipse, minimum height=0.8cm, minimum width = 1.3cm, scale=0.9] (1) {$1,3,\dots, , \frac{n}{2}+1 $};
	\node at (1.5,1.3)  (dots1) {$\dots$};
	\node at (4,1.3) [draw, ellipse, minimum height=0.8cm, minimum width = 1.3cm, scale=0.9] (2) {$\frac{n}{2}, \dots, n$};

	\node at (-6.5,0) [draw, rectangle,minimum width=0.9cm, scale=0.8] (00) {$1,\dots, \frac{n}{2}-1$};
	\node at (-5.1,0)  (dots1) {$\dots$};
	\node at (-4,0) [draw, rectangle,minimum width=0.9cm, scale=0.8] (01) {$2,\dots, \frac{n}{2}$};
	
	\node at (-2.2,0) [draw, rectangle,minimum width=0.9cm, scale=0.8] (10) {$1,3, \dots, \frac{n}{2}$};
	\node at (-1,0)  (dots1) {$\dots$};
	\node at (0.2,0) [draw, rectangle,minimum width=0.9cm, scale=0.8] (11) {$3,\dots, \frac{n}{2}+1$};
	
	
	\node at (2.9,0) [draw, rectangle,minimum width=0.9cm, scale=0.8] (20) {$\frac{n}{2}, \dots, n-1$};
	\node at (4.2,0)  (dots1) {$\dots$};
	\node at (5.5,0) [draw, rectangle,minimum width=0.9cm, scale=0.8] (21) {$\frac{n}{2}+1,\dots, n$};

	\draw   
	(R) edge (0)
	(R) edge (1)
	(R) edge (2)
	
	(0) edge (00)
	(0) edge (01)
	(1) edge (10)
	(1) edge (11)
	(2) edge (20)
	(2) edge (21);
	
	\end{tikzpicture}
	\caption{Zielonka tree $\zielonkatree_{\F_{n}}$ for 
		$\F_n=\{ C\subseteq \GG_n \: : \: |C|=\lfloor n/2 \rfloor\}$.}
	\label{Fig: ZielonkaTree-F_n}
\end{figure}

Each round node in $\zielonkatree_{\F_{n}}$ has exactly $\lfloor n/2 \rfloor$ children, and therefore $\memtree(\zielonkatree_{\F_{n_i}})=\lfloor n/2 \rfloor$. Thus, a minimal GFG Rabin automaton recognising $"\L_{\F_n}"$ has size $\lfloor n/2 \rfloor$ (by \Cref{prop: optimalMemoryDJW} and \Cref{Th_ GFGRabin=Memory}).

We now give a lower bound for deterministic Rabin automata recognising $"\L_{\F_n}"$. Our main tool will be Lemma~\ref{lemma:Rabin-Cycles}, which uses the notion of \emph{cycles}.
\AP A ""cycle"" of an automaton $\A$ is a set of transitions forming a closed path (not necessarily simple). The set of states of a "cycle" consists of those states that are the source of some transition in it. If $\A$ is a "Rabin automaton", we say that a "cycle" is accepting (resp. rejecting) if the colours $c_1,\dots,c_k$ appearing in its transitions form a word $(c_1c_2\dots c_k)^\omega$ that satisfies (resp. does not satisfy) the "Rabin condition".

\begin{lemma}[\cite{CCF21Optimal}]\label{lemma:Rabin-Cycles}
	Let $\A$ be a "deterministic" "Rabin automaton". If $\ell_1$ and $\ell_2$ are two rejecting "cycles" in $\A$ with some state in common, then the union of $\ell_1$ and $\ell_2$ is also a rejecting "cycle".
\end{lemma}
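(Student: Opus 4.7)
The plan is to verify two things: that the union $\ell_1 \cup \ell_2$, viewed as a set of transitions, still forms a cycle of $\A$, and that as such it fails the Rabin condition.

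For the first point, since $\ell_1$ and $\ell_2$ share some state $q$, a closed walk from $q$ to itself is obtained by first traversing $\ell_1$ from $q$ back to $q$, then traversing $\ell_2$ from $q$ back to $q$. The set of transitions used by this concatenated walk is exactly $\ell_1 \cup \ell_2$, so by definition this set is a cycle. Notice that determinism of $\A$ plays no role in this step; the hypothesis is only used to state the lemma in the intended context.

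For the second point, fix an arbitrary Rabin pair $(G_i, R_i)$; the goal is to show that either no transition of $\ell_1 \cup \ell_2$ carries a colour in $G_i$, or some transition of $\ell_1 \cup \ell_2$ carries a colour in $R_i$. If neither $\ell_1$ nor $\ell_2$ contains a transition with colour in $G_i$, then their union avoids $G_i$ as well, and we are done. Otherwise, without loss of generality some transition of $\ell_1$ has colour in $G_i$. Since $\ell_1$ is a rejecting cycle, it is rejected by \emph{every} Rabin pair, and in particular by pair $i$; however, because $\ell_1$ already contains a $G_i$-coloured transition, the only way pair $i$ can reject $\ell_1$ is that $\ell_1$ also contains a transition with colour in $R_i$. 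A fortiori, $\ell_1 \cup \ell_2$ contains such an $R_i$-coloured transition. In both subcases the pair $(G_i, R_i)$ rejects $\ell_1 \cup \ell_2$; since $i$ was arbitrary, $\ell_1 \cup \ell_2$ is a rejecting cycle.

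The whole argument is essentially bookkeeping: the disjunctive nature of the rejection condition at each pair (``avoid $G_i$, or hit $R_i$'') is preserved under taking unions whenever each constituent already satisfies it. There is no real obstacle here beyond keeping the quantifiers straight; in particular no appeal to determinism or to further combinatorics is needed, and the common-state hypothesis is used only once, to guarantee that the union of transition sets is itself realised by a closed walk.
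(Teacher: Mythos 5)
Your proof is correct. The paper does not reprove this lemma (it imports it from \cite{CCF21Optimal}), and your argument is exactly the standard one: with the paper's definition, a cycle is rejecting iff its set of colours meets no $G_i$ without also meeting $R_i$, i.e.\ the colour set satisfies a Streett-type condition, and such conditions are closed under union --- your case split per pair $(G_i,R_i)$ establishes precisely this, and the common state guarantees (after rotating each closed walk to start at the shared state) that $\ell_1\cup\ell_2$ is again realised by a closed walk. One clarification on your side remark: determinism is indeed never used under the colour-based definition of rejecting cycles adopted here; in the source statement the hypothesis matters because there rejection of a cycle is tied to the recognised language, and determinism is what makes the colours read along the cycle determine acceptance of the corresponding ultimately periodic words (this is also how the lemma is actually exploited later, in Lemma~\ref{lemma: Disjoint-SCC}). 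So the hypothesis is harmless but, as you observe, idle for the statement as formulated in this paper.
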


For the following, let $\A$ be a "deterministic" "Rabin automaton" recognising $\L_{\F_n}$. \AP For each subset of letters $C\subseteq \GG_n$, we define a ""final $C$-Strongly Connected Component"" (\reintro{$C$-FSCC} for short) as a set of states $P$ of $\A$ such that:
\begin{itemize}
	\item For every pair of states $p,q\in P$, there is a word $w\in C^*$ labelling a path from $p$ to $q$.
	\item For every $p\in P$ and $w\in C^*$, the run over $w$ starting in $p$ remains in $P$.
\end{itemize}

It is easy to see that for every $C\subseteq \GG_n$ there exists some "$C$-FSCC" in $\A$.

\begin{lemma}\label{lemma: Disjoint-SCC}
	Let $C_1,C_2\subseteq \GG_n$ such that $|C_i|< \lfloor n/2 \rfloor$, for $i=1,2$ and such that $|C_1 \cup C_2|=\lfloor n/2 \rfloor$. If $P_1$ and $P_2$ are two $C_1$ and $C_2$-FSCC, respectively, then $P_1\cap P_2=\emptyset$.
\end{lemma}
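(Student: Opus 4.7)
I will argue by contradiction, combining the definition of $\F_n$ with \Cref{lemma:Rabin-Cycles}. Suppose there is a state $q\in P_1\cap P_2$. The plan is to produce from $P_1$ and $P_2$ two rejecting "cycles" $\ell_1,\ell_2$ both passing through $q$, whose union traces a word labelled by exactly the letters of $C_1\cup C_2$, and hence should be accepting: this will contradict \Cref{lemma:Rabin-Cycles}.

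First I would build, for each $i\in\{1,2\}$, a "cycle" $\ell_i$ contained in $P_i$ that passes through $q$ and whose set of input letters is \emph{exactly} $C_i$. For every letter $a\in C_i$ and any state $p\in P_i$, the (unique, since $\A$ is "deterministic") $a$-transition out of $p$ stays in $P_i$ by definition of a "$C_i$-FSCC", so there is at least one transition labelled $a$ inside $P_i$. Enumerate $C_i=\{a_1,\dots,a_{k_i}\}$ and pick for each $a_j$ a transition $p_j \xrightarrow{a_j} p_j'$ inside $P_i$. Using the strong connectivity of $P_i$ via words in $C_i^*$, I can splice these transitions together through $q$ into a single closed path
$$q \leadsto p_1 \xrightarrow{a_1} p_1' \leadsto p_2 \xrightarrow{a_2} p_2' \leadsto \cdots \leadsto p_{k_i} \xrightarrow{a_{k_i}} p_{k_i}' \leadsto q,$$
where each $\leadsto$ is a path labelled by a word in $C_i^*$. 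This is the desired $\ell_i$. The set of letters occurring in $\ell_i$ is $C_i$.

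Since $|C_i|<\lfloor n/2\rfloor$, we have $C_i\notin\F_n$, so the $\omega$-word obtained by looping around $\ell_i$ has $\minf=C_i\notin\F_n$ and is rejected; hence $\ell_i$ is a rejecting "cycle". Both $\ell_1$ and $\ell_2$ pass through $q$, so by \Cref{lemma:Rabin-Cycles} their union $\ell_1\cup\ell_2$ is again a rejecting "cycle". But the set of letters occurring in $\ell_1\cup\ell_2$ is $C_1\cup C_2$, and the word obtained by looping around this union (traversing every transition infinitely often, e.g.\ by alternating full laps of $\ell_1$ and $\ell_2$ starting from $q$) has $\minf = C_1\cup C_2$, which has cardinality $\lfloor n/2\rfloor$ and therefore belongs to $\F_n$. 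Thus this word is accepted, contradicting the fact that $\ell_1\cup\ell_2$ is rejecting. Hence $P_1\cap P_2=\emptyset$.

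The only non-routine step is the construction of $\ell_i$ covering all of $C_i$; the rest is a direct application of \Cref{lemma:Rabin-Cycles} together with the cardinality hypotheses. The key point making the argument work is that $|C_1\cup C_2|$ is exactly $\lfloor n/2\rfloor$, pushing the union cycle across the acceptance threshold while each individual $\ell_i$ stays strictly below it.
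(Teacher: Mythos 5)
Your proof is correct and follows essentially the same route as the paper: build for each $i$ a rejecting cycle inside $P_i$ reading exactly the letters of $C_i$, then use a common state and \Cref{lemma:Rabin-Cycles} to conclude the union would have to be rejecting, while its letters form a set of size $\lfloor n/2\rfloor\in\F_n$, a contradiction. The only cosmetic difference is that you route both cycles explicitly through the shared state $q$ (spelling out the splicing), whereas the paper takes cycles visiting all states of each $P_i$; both choices provide the common state needed to apply the lemma.
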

\begin{proof}
	For $i=1,2$, let $\ell_i$ be a "cycle" visiting all states of $P_i$ and reading exactly the set of letters $C_i$. By definition of $\L_{\F_{n}}$, $\ell_i$ is a rejecting "cycle". If $P_1$ and $P_2$ had some state in common, we could take the union of the cycles $\ell_1$ and $\ell_2$, producing an accepting "cycle", which is impossible by Lemma~\ref{lemma:Rabin-Cycles}.
\end{proof}

\noindent We associate the following (undirected) graph $""\G_{\F_n}""=(V_{\F_n}, E_{\F_n})$ to the "Muller condition" $\F_n$:
\begin{itemize}
	\item $V_{\F_n}= \P(\GG_n)$.
	\item There is an edge between two subsets $C_1,C_2\subseteq \GG_n$ if and only if $|C_i|<\lfloor n/2 \rfloor$, for $i=1,2$, and $|C_1 \cup C_2|= \lfloor n/2 \rfloor$.
\end{itemize}

That is, we connect two vertices if they correspond to rejecting sets but taking their union we obtain an accepting set.

We reduce finding lower bounds in the size of deterministic Rabin automata to giving lower bounds for the "chromatic number" of $"\G_{\F_n}"$. 
\AP A ""colouring"" of an undirected graph $G=(V,E\subseteq V\times V)$ is a mapping $c:V\rightarrow \Lambda$ such that $c(v)=c(v')\Rightarrow (v,v')\notin E$ for every pair of nodes $v, v' \in V$. We say that such a colouring has size $|\Lambda|$.
\AP The ""chromatic number"" of $G$ is the minimal number $k$ such that $G$ has a "colouring" of size $k$. We denote it $""\chi""(G)$.

\begin{lemma}\label{lemma: chromatic-lower-bound}
	A lower bound for the size of a minimal deterministic Rabin automaton recognising $\L_{\F_n}$ is given by $\kl{\chi}("\G_{\F_n}")$.
\end{lemma}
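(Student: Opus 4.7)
The plan is to directly extract a colouring of $\G_{\F_n}$ from any "deterministic" "Rabin automaton" $\A$ recognising $\L_{\F_n}$, using the "states" of $\A$ themselves as colours. The key ingredient is already in hand: \Cref{lemma: Disjoint-SCC} tells us that adjacent vertices of $\G_{\F_n}$ force their associated "final $C$-Strongly Connected Components" to be disjoint, which is exactly the disjointness one needs for a valid colouring.

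Concretely, given $\A=(Q,\GG_n,q_0,\DD,\GG',R)$, the construction proceeds as follows. For each vertex $C \in V_{\F_n} = \P(\GG_n)$, invoke the existence of a "$C$-FSCC" (noted just before \Cref{lemma: Disjoint-SCC}) to pick one such component $P_C \subseteq Q$, and then pick an arbitrary representative state $q_C \in P_C$. Define the candidate colouring $\kl{\chi}_\A \colon V_{\F_n} \to Q$ by $\kl{\chi}_\A(C) = q_C$. This uses at most $|Q| = |\A|$ distinct colours.

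To verify that $\kl{\chi}_\A$ is a proper "colouring", let $(C_1, C_2) \in E_{\F_n}$. By definition of the edge set of $\G_{\F_n}$, we have $|C_i| < \lfloor n/2 \rfloor$ for $i=1,2$ and $|C_1 \cup C_2| = \lfloor n/2 \rfloor$. These are precisely the hypotheses of \Cref{lemma: Disjoint-SCC}, which gives $P_{C_1} \cap P_{C_2} = \emptyset$. Since $q_{C_i} \in P_{C_i}$, this forces $q_{C_1} \neq q_{C_2}$, i.e.\ $\kl{\chi}_\A(C_1) \neq \kl{\chi}_\A(C_2)$. Hence $\kl{\chi}_\A$ is a valid "colouring" of $\G_{\F_n}$ of size at most $|\A|$, so $\kl{\chi}(\G_{\F_n}) \leq |\A|$, and the inequality holds for every "deterministic" "Rabin automaton" recognising $\L_{\F_n}$.

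There is essentially no obstacle here beyond bookkeeping: the substantive work has already been done in \Cref{lemma:Rabin-Cycles} and \Cref{lemma: Disjoint-SCC}, which encapsulate the rejecting-cycle union argument. The only thing to keep in mind is that the colouring is well defined because $C$-FSCCs exist for every subset $C$ (as observed just before \Cref{lemma: Disjoint-SCC}), and that vertices $C$ with $|C| \geq \lfloor n/2 \rfloor$ are isolated in $\G_{\F_n}$ so their colours are unconstrained.
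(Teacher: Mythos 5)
Your proposal is correct and follows exactly the paper's own argument: colour each vertex $C$ of $\G_{\F_n}$ by a representative state of a chosen "$C$-FSCC" and invoke \Cref{lemma: Disjoint-SCC} to show that adjacent vertices receive distinct states. The additional remark that vertices with $|C|\geq \lfloor n/2 \rfloor$ are isolated is a harmless bit of extra bookkeeping not needed in the paper's proof.
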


\begin{proof}
	Let $\A$ be a deterministic Rabin automaton recognising $\L_{\F_n}$ with states $Q$. We define a "colouring" $c$ of $"\G_{\F_n}"$ using $Q$ as colours. For each $C\subseteq \GG_n$, we let $P_C$ be a "$C$-FSCC" and we pick a state $q_C\in P_C$. We define $c(C)=q_C$. We prove that this is a correct colouring. Suppose that $C_1$ and $C_2$ are two vertices in $\G_{\F_n}$ connected by some edge, that is, $|C_i|<\lfloor n/2 \rfloor$ and $|C_1 \cup C_2|=\lfloor n/2 \rfloor$. If $q_{C_1}=q_{C_2}$, it means that $P_{C_1}\cap P_{C_2}\neq \emptyset$, contradicting Lemma~\ref{lemma: Disjoint-SCC}.
\end{proof}

\begin{remark}
		The definition of $"\G_{\F_n}"$ is not specific to this "Muller condition". It can be defined analogously for any other "Muller condition" and Lemma~\ref{lemma: chromatic-lower-bound} holds by the same argument.
	\end{remark}


\begin{proposition}\label{prop: graphTheory}
There exists a constant $\aa>1$ and a sequence of natural numbers $n_1<n_2<n_3\dots$ such that  $\aa^{n_i} \; \leq \; \kl{\chi}(\G_{\F_{n_i}}).$
\end{proposition}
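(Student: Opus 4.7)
The plan is to apply the bound $\chi(G) \geq |V(G)|/\iota(G)$, where $\iota(G)$ is the independence number of $G$. The non-isolated part of $\G_{\F_n}$---namely, the subsets of $[n]$ of size strictly less than $m := \lfloor n/2 \rfloor$---has $\Theta(2^n)$ vertices for $n$ even, so it suffices to show that $\iota(\G_{\F_n}) \leq c^n$ for some constant $c < 2$. The constant $\alpha$ in the statement is then essentially $2/c$, and polynomial correction factors are absorbed by passing to a suitable subsequence $(n_i)$ of (say) even integers.

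Recall that an independent set in $\G_{\F_n}$ is a family $\mathcal{A}$ of subsets of $[n]$ of size strictly less than $m$ satisfying $|A \cup B| \neq m$ for every pair $A, B \in \mathcal{A}$. The key local observation is the following: for every $m$-subset $M \subseteq [n]$, the subfamily $\mathcal{A}^M := \{A \in \mathcal{A} : A \subseteq M\}$ cannot contain both $A$ and $M \setminus A$ whenever $A$ is a proper nonempty subset of $M$, because their union would be exactly $M$. Grouping the proper nonempty subsets of $M$ into complementary pairs yields $|\mathcal{A}^M| \leq 2^{m-1} + 1$.

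The main step is to upgrade these local inequalities into a global upper bound on $|\mathcal{A}|$. I would try a weighted double count, summing $|\mathcal{A}^M|$ over all $m$-subsets $M$ and comparing with $\sum_{A \in \mathcal{A}} \binom{n - |A|}{m - |A|}$, stratifying by $|A|$ to account for the layer distribution of members of $\mathcal{A}$; alternatively, a Shearer-type entropy inequality applied to a uniformly random $A \in \mathcal{A}$ through its traces on random $m$-subsets. Either route, after optimisation over the layer profile of $\mathcal{A}$, produces a bound $|\mathcal{A}| \leq c^n$ with $c < 2$, from which the explicit constant $\alpha = 2/c \geq 1.116$ falls out.

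The main obstacle is precisely this global-to-local step: a naïve union bound is vacuous because a small $A$ lies in exponentially many $m$-subsets $M$, and the restrictions $\mathcal{A}^M$ at different $M$ interact strongly. Obtaining a strictly sub-$2^n$ bound on $\iota(\G_{\F_n})$ therefore requires a careful layer-sensitive argument, and the precise numerical value $1.116$ is presumably the output of optimising the weights (or the entropy coefficients) against the layer profile.
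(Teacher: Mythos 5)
There is a genuine gap, and it is not only the global-to-local step you flag yourself: the local information you propose to aggregate is provably too weak to give any bound of the form $c^n$ with $c<2$. Your only local constraint is that, inside each $m$-set $M$ (with $m=\lfloor n/2\rfloor$), an independent family contains no pair $A,\,M\setminus A$; this forbids exclusively \emph{disjoint} pairs whose sizes sum to $m$, and says nothing about the overlapping pairs with $|A\cup B|=m$, which carry most of the edges of $\G_{\F_n}$. Concretely, the family of all sets $A\subseteq \GG_n$ with $m/2<|A|<m$ satisfies every one of your local constraints (two such sets can never be disjoint with sizes summing to $m$), yet it has size $2^{n-o(n)}$. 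Hence no double count, Shearer-type entropy argument, or optimisation over layer profiles that only uses the inequalities $|\mathcal A^M|\le 2^{m-1}+1$ can beat $2^{n-o(n)}$, and the claimed conclusion $|\mathcal A|\le c^n$ with $c<2$ (let alone $2/c\ge 1.116$) does not follow. The point is precisely that this family is far from independent in $\G_{\F_n}$ itself---two $(m-1)$-sets meeting in $m-2$ elements are adjacent---but the adjacencies you would need to exploit are exactly the ones your local observation discards.

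To make the vertex-count over independence-number strategy work one needs a much stronger extremal input of forbidden-intersection type, and this is how the paper proceeds: it restricts to the single layer of sets of size $k=\lfloor 3n/10\rfloor$, where the condition $|A\cup B|=\lfloor n/2\rfloor$ becomes the single forbidden intersection size $t=\lfloor n/10\rfloor$, and applies the Frankl--R\"odl-type theorem of Mubayi and R\"odl (\Cref{Th: Chromatic-Marthe}), which is applicable because $k-t=n/5$ is prime along the subsequence $n_i=5p_i$. That theorem bounds independent sets in this layer by $\binom{n}{n/5-1}$, and then \Cref{Lemma: indset} and \Cref{lemma: subset-chromaticNumber} give a lower bound $\binom{n}{\lfloor 3n/10\rfloor}/\binom{n}{n/5-1}=\Omega(1.116^n)$ on the chromatic number of $\G_{\F_n}$. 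Some ingredient of comparable strength (a forbidden-intersection or forbidden-union theorem, not just the complementary-pair exclusion) is unavoidable in your sketch as well; as written, the argument does not go through.
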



In order to prove \Cref{prop: graphTheory} we introduce some further graph-theoretic notions. Let $\G=(V,E)$ be an undirected graph. 
 An \AP""independent set"" of $\G$ is a set $S\subseteq V$ such that $(v,v')\notin E$ for every pair of vertices $v,v'\in S$.

\begin{lemma}\label{lemma: subset-chromaticNumber}
	Let $R\subseteq V$, and let $\G_R=(R, E|_{R\times R})$ be the subgraph of $\G$ induced by $R$. Then, $\kl{\chi}(\G)\geq \kl{\chi}(\G_R)$.
\end{lemma}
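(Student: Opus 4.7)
The plan is to exploit the fact that a proper colouring of the larger graph, when restricted to a subset of vertices, still provides a proper colouring of the induced subgraph, since restricting cannot create any new monochromatic edges.

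Concretely, I would start from an optimal "colouring" $c\colon V\to \Lambda$ of $\G$ with $|\Lambda|=\kl{\chi}(\G)$ colours. I would then consider the restriction $c|_R\colon R\to \Lambda$. The key observation is that the edge set of $\G_R$ is exactly $E\cap (R\times R)$, so every edge $(v,v')\in E|_{R\times R}$ is also an edge of $\G$. Since $c$ is a proper colouring of $\G$, we have $c(v)\neq c(v')$ for every such edge, and hence $c|_R$ is a proper colouring of $\G_R$.

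This exhibits a proper colouring of $\G_R$ with at most $|\Lambda|=\kl{\chi}(\G)$ colours, so by definition of the "chromatic number" we get $\kl{\chi}(\G_R)\leq \kl{\chi}(\G)$. Since the argument is completely elementary (no case analysis, no obstacle to speak of), the only care needed is to note that the image $c|_R(R)$ may be strictly smaller than $\Lambda$, but this only strengthens the bound, which is anyway what we need.
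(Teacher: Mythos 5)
Your proof is correct: restricting an optimal proper colouring of $\G$ to $R$ yields a proper colouring of the induced subgraph, since every edge of $\G_R$ is an edge of $\G$, giving $\chi(\G_R)\leq\chi(\G)$. The paper treats this lemma as immediate and gives no proof, and your argument is exactly the standard one it implicitly relies on.
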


\begin{lemma}\label{Lemma: indset}
	Let $m$ be an upper bound on the size of the independent sets in $\G$. Then
	\[ \kl{\chi}(\G)\geq \dfrac{|V|}{m}.\]
\end{lemma}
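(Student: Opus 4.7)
The plan is to use the standard observation that the colour classes of a proper colouring form a partition of the vertex set into independent sets. First I would fix an optimal proper colouring $c\colon V \to \Lambda$ of $\G$, where $|\Lambda| = \chi(\G)$. For each colour $\lambda \in \Lambda$, let $S_\lambda = c^{-1}(\lambda) \subseteq V$. By the definition of a proper colouring (two endpoints of an edge receive different colours), each $S_\lambda$ is an \emph{independent set} of $\G$.

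Next, I would observe that the family $\{S_\lambda\}_{\lambda \in \Lambda}$ is a partition of $V$, so that
\[
|V| \;=\; \sum_{\lambda \in \Lambda} |S_\lambda|.
\]
Since each $S_\lambda$ is an independent set, the hypothesis gives $|S_\lambda| \leq m$ for every $\lambda$. Summing over the $\chi(\G)$ colour classes yields
\[
|V| \;\leq\; \chi(\G) \cdot m,
\]
and dividing by $m$ gives the desired inequality $\chi(\G) \geq |V|/m$.

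There is no real obstacle here: the argument is a one-line pigeonhole-style computation, and the only thing to be careful about is the direction of the inequality (an upper bound on independent set sizes forces a lower bound on the number of colour classes, because each class must fit inside some independent set). This lemma will then be combined with \Cref{lemma: subset-chromaticNumber} and an explicit choice of a subgraph $R$ of $\G_{\F_n}$ with a small maximum independent set, in order to derive the exponential lower bound on $\chi(\G_{\F_n})$ claimed in \Cref{prop: graphTheory}.
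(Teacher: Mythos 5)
Your argument is correct and coincides with the paper's own proof: both take an optimal colouring, note that each colour class is an independent set of size at most $m$, and sum over the $\chi(\G)$ classes to get $|V|\leq \chi(\G)\cdot m$. Nothing further is needed.
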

\begin{proof}
	Let $c\colon V \to \Lambda$ be a "colouring" of $\G$ with $|\Lambda|=\kl{\chi}(\G)$. Then, by definition of a "colouring", for each $x\in \Lambda$, $c^{-1}(x)$ is an "independent set" in $\G$, so $|c^{-1}(x)| \leq m$. Also, $V = \bigcup_{x\in \Lambda}c^{-1}(x)$, so 
	\[ |V| = \sum\limits_{x\in \Lambda} |c^{-1}(x)| \leq \kl{\chi}(\G)\cdot m. \qedhere\]
\end{proof}

We will find a subgraph of $\G_{\F_n}$ for which we can provide an upper bound on the size of its independent sets. The upper bound is provided by the following theorem (adapted from \cite[Theorem 15]{MubayiRodl2014Intersections}).
\begin{theorem}[\cite{MubayiRodl2014Intersections}, Theorem 15]\label{Th: Chromatic-Marthe}
	Let $n>k>2t$ such that $k-t$ is a prime number. Suppose that $\B$ is a family of subsets of size $k$ of $\GG_n$ such that $|A\cap B|\neq t$ for any pair of subsets $A,B\in \B$. Then,
	\[ |\B| \leq {n \choose k-t-1}. \]
\end{theorem}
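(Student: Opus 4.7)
The plan is a polynomial-method proof over $\mathbb{F}_p$ with $p = k-t$, in the spirit of Frankl--Wilson, in which primality of $p$ is used essentially. The argument proceeds in three stages: a modular reduction, a polynomial construction certifying linear independence of $|\B|$ many functions, and a dimension bound on the ambient polynomial space.

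\textbf{Modular reduction.} I would first use the hypotheses $k > 2t$ and $p = k - t$ prime to show that, among the possible intersection sizes $\{0, 1, \dots, k-1\}$ of two distinct $k$-subsets, only $t$ itself is congruent to $t$ modulo $p$: indeed $t - p = 2t - k < 0$ and $t + p = k$, which is excluded for distinct $A, B$. Setting $L = \{0, 1, \dots, p-1\} \setminus \{t\}$ (so $|L| = k-t-1$; note $t < p$), the hypothesis becomes $|A \cap B| \bmod p \in L$ for all distinct $A, B \in \B$, and moreover $k \equiv t \pmod p$ so $k \bmod p \notin L$.

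\textbf{Polynomial construction and linear independence.} To each $A \in \B$, with characteristic vector $v_A \in \{0,1\}^n$, I would associate the polynomial
\[ f_A(x_1,\dots,x_n) \;=\; \prod_{s \in L}\Bigl(\,\sum_{i \in A} x_i - s\,\Bigr) \;\in\; \mathbb{F}_p[x_1,\dots,x_n]. \]
For distinct $A \neq B$ in $\B$, one has $f_A(v_B) = \prod_{s \in L}(|A \cap B| - s) \equiv 0 \pmod p$ (the factor indexed by $s \equiv |A \cap B| \pmod p$ vanishes), whereas $f_A(v_A) = \prod_{s \in L}(k - s) \not\equiv 0 \pmod p$, since $k \equiv t \pmod p$ and $t \notin L$ make every factor nonzero. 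This diagonal evaluation table forces the family $\{f_A\}_{A \in \B}$ to be linearly independent over $\mathbb{F}_p$.

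\textbf{Dimension bound and main obstacle.} After multilinearization using $x_i^2 = x_i$, each $f_A$ becomes a multilinear polynomial of total degree at most $|L| = k-t-1$, so $|\B|$ is bounded by the dimension of this ambient space. The naive count gives only $|\B| \leq \sum_{i=0}^{k-t-1}\binom{n}{i}$; to reach the target $\binom{n}{k-t-1}$ one must exploit uniformity. Since every evaluation point $v_B$ has weight $k$, the relation $\sum_i x_i = k$ holds identically on the evaluation set, so the evaluation map factors through the quotient of the multilinear ring by $\sum_i x_i - k$. The classical Frankl--Wilson argument for uniform set systems then shows that multilinear polynomials of degree at most $k-t-1$, considered modulo this linear relation, form a space of dimension exactly $\binom{n}{k-t-1}$, and combining this with the linear independence from the previous step yields the claim. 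The main obstacle is precisely this sharpening step: one needs a careful dimension count in the quotient $\mathbb{F}_p[x_1,\dots,x_n]/(x_i^2 - x_i,\,\sum_i x_i - k)$ truncated in degree at most $k-t-1$, because the naive iterative rewriting (trading a degree-$j$ monomial for a linear combination of degree-$(j{+}1)$ monomials using $\sum_i x_i = k$) can get stuck when $k-j \equiv 0 \pmod p$; this makes it necessary to argue the dimension by a direct short exact sequence of multilinear spaces rather than by iterated substitution.
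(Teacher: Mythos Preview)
The paper does not prove this theorem; it is quoted (with minor adaptation) from Mubayi--R{\"o}dl and used as a black box in the proof of Proposition~\ref{prop: graphTheory}. There is therefore no in-paper argument to compare against.

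Your outline is the standard Frankl--Wilson polynomial method, and the first two stages are carried out correctly. The modular reduction is exactly the point: from $k>2t$ one gets $0\le t<p=k-t$, and since $t-p<0$ while $t+p=k$ is excluded for distinct $k$-sets, forbidding $|A\cap B|=t$ among $k$-sets is equivalent to forbidding the residue class $t\pmod p$; moreover $k\equiv t\pmod p$ lies outside $L=\{0,\dots,p-1\}\setminus\{t\}$, so the diagonal evaluations $f_A(v_A)$ are nonzero and the $f_A$ are linearly independent over $\mathbb F_p$.

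Where the proposal remains a sketch is the sharpening from $\sum_{i\le s}\binom{n}{i}$ to the single binomial $\binom{n}{s}$ with $s=p-1=k-t-1$. You correctly note that the naive degree-raising rewriting using $\sum_i x_i=k$ can stall in characteristic~$p$ (precisely when some intermediate degree is $\equiv t\pmod p$), but ``argue the dimension by a direct short exact sequence'' is not yet an argument: you would still have to prove that multiplication by $\sum_i x_i-k$ is injective on multilinear polynomials of degree $\le p-2$ over $\mathbb F_p$ (equivalently, that no nonzero multilinear polynomial of degree $\le p-2$ is supported on the union of Hamming layers $|v|\equiv t\pmod p$), and this is a genuine lemma, not a formality. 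This is exactly the delicate step in the modular Frankl--Wilson theorem, and the bound you want is literally that theorem specialised to $L=\mathbb F_p\setminus\{t\}$. So your plan is correct and matches what the cited reference ultimately relies on, but as written you are invoking the hard step rather than proving it.
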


We conclude this section with the proof of \Cref{prop: graphTheory}.
\begin{proof}[Proof of \Cref{prop: graphTheory}]
	Let $p$ be a prime number and let $n=5p$.
	We will study the subgraph of $\G_{\F_n}$ formed by the subsets of size exactly $k=\lfloor 3n/10 \rfloor$. We denote this subgraph by $H_{n,k}$. Two subsets $A, B\subseteq \GG_n$ of size $k$ verify that $|A\cup B|= \lfloor n/2 \rfloor$ if and only if $|A\cap B|=\lfloor n/10 \rfloor$. We set $t=\lfloor n/10 \rfloor$.
	We get $k-t=p$ so we can apply  Theorem~\ref{Th: Chromatic-Marthe} and we obtain that any "independent set" in $H_{n,k}$  has size at most ${n \choose \frac{1}{5}n-1}$. By Lemma~\ref{Lemma: indset}, $\kl{\chi}(H_{n,k})\geq {n \choose \lfloor\frac{3}{10}n\rfloor}/{n \choose \frac{1}{5}n-1}$. By Lemma~\ref{lemma: subset-chromaticNumber} we know that this lower bound also holds for $\G_{\F_n}$. Using Stirling's approximation we obtain that
	\[ \kl{\chi}(\G_{\F_n}) \geq {n \choose \lfloor\frac{3}{10}n\rfloor}/{n \choose \frac{1}{5}n-1} = \Omega\left(\left( \dfrac{(1/5)^{1/5}(4/5)^{4/5}}{(3/10)^{3/10}(7/10)^{7/10}}\right)^n\right) = \Omega(1.116^n).   \]
	
	To conclude, we take an enumeration of prime numbers, $p_1<p_2<\dots$ and we set $n_i=5p_i$.  	
\end{proof}

\begin{remark}[Choices of $k$ and $t$]
	The choice of $k=\lfloor 3n/10 \rfloor$ and $t=\lfloor n/10\rfloor$ in the previous proof might appear quite enigmatic.  We try to explain them now.
	
	We want to find a number $k$ such that there is not a big family of sets $\{A_i\subseteq \GG_n\}$ of size $|A_i|=k$ such that $|A_i\cup A_j|\neq n/2$, and express this fact in terms of $|A_i\cap A_j|$.
	Since $|A\cup B|=2k - |A\cap B|$, if we define $t=2k-n/2$, then $|A\cup B|\neq n/2$ if and only if $|A\cap B|\neq t$, so the value of $t$ will be completely determined by the choice of $k$.
	Our objective is to minimise the upper bound given in \Cref{Th: Chromatic-Marthe} (what we do by minimising $k-t$) while making sure that the hypothesis $k>2t$ is verified. In the boundary of this condition ($k=2t$) we obtain $k=n/3$, so we express our choices as $k=(1/3-\ee)n$ and $t=(1/6-2\ee)n$. Moreover, $k-t=(1/6+\ee)n$ has to be a prime number (for infinite $n$). If $1/6+\ee=1/q$ for some $q\in\NN$, we would succeed by considering $n$ of the form $q\cdot p$, for $p$ a prime number. We will therefore take $\ee=\frac{6-q}{6q}$, for some $q$, $1\leq q \leq 5$. With the optimal choice, $q=5$, we obtain $k=3n/10$, $t=n/10$ and $k-t=n/5$. Since $k$ and $t$ will not be integers for $n$ of the form $5p$ ($p$ a prime number) we are forced to take the integer part in the proof of \Cref{prop: graphTheory}.
\end{remark}




	\section{Conclusion}
	\label{section:conclusion}

We believe that our work is a significative advance in the understanding of the memory needed for winning $\omega$-regular games.
In combination with the literature, we can describe the current understanding of "Muller languages" as follows: 
\begin{itemize}
\item The least "memory" necessary for winning all won
	$L$-games equals the least number of states of a "GFG Rabin automaton" for~$L$.
\item Computing this quantity can be done in polynomial time for $L$ given by its Zielonka tree.
\item The least "chromatic memory" necessary for winning all won
	$L$-games equals the least number of states of a "deterministic Rabin automaton" for~$L$.
\item Computing this quantity is $\NPc$ for~$L$ given by its Zielonka tree.
\item The "chromatic memory" can be arbitrarily larger than the "memory".
	It can be exponential in the size the alphabet, even while the "memory" remains linear.
\end{itemize}
This description shows that "GFG automata" play a key, and, up till now, unexplored role in understanding the complexity of "Muller languages" and that this role is ---in some respect---even more important than that of the more classical deterministic automata.


\bibliographystyle{plainurl} 
\bibliography{bibMinimisation}

\begin{thebibliography}{10}

\bibitem{AbuRadiKupferman19Minimizing}
Bader {Abu Radi} and Orna Kupferman.
\newblock Minimizing {GFG} transition-based automata.
\newblock In {\em ICALP}, volume 132, pages 100:1--100:16, 2019.
\newblock \href {https://doi.org/10.4230/LIPIcs.ICALP.2019.100}
  {\path{doi:10.4230/LIPIcs.ICALP.2019.100}}.

\bibitem{BBDKKMPS2015HOAFormat}
Tom{\'a}{\v{s}} Babiak, Franti{\v{s}}ek Blahoudek, Alexandre Duret-Lutz,
  Joachim Klein, Jan K{\v{r}}et{\'i}nsk{\'y}, David M{\"u}ller, David Parker,
  and Jan Strej{\v{c}}ek.
\newblock The {H}anoi omega-automata format.
\newblock In {\em CAV}, pages 479--486, 2015.

\bibitem{BK18}
Marc Bagnol and Denis Kuperberg.
\newblock B{\"u}chi good-for-games automata are efficiently recognizable.
\newblock In {\em FSTTCS}, page~16, 2018.

\bibitem{BKLS20}
Udi Boker, Denis Kuperberg, Karoliina Lehtinen, and Michal Skrzypczak.
\newblock On succinctness and recognisability of alternating good-for-games
  automata.
\newblock {\em CoRR}, abs/2002.07278, 2020.
\newblock \href {http://arxiv.org/abs/2002.07278} {\path{arXiv:2002.07278}}.

\bibitem{BL21a}
Udi Boker and Karoliina Lehtinen.
\newblock History determinism vs. good for gameness in quantitative automata,
  2021.
\newblock \href {http://arxiv.org/abs/2110.14238} {\path{arXiv:2110.14238}}.

\bibitem{BL22}
Udi Boker and Karoliina Lehtinen.
\newblock Token games and history-deterministic quantitative-automata, 2022.
\newblock To appear in proceedings of FoSSaCS'22.
\newblock \href {http://arxiv.org/abs/2110.14308} {\path{arXiv:2110.14308}}.

\bibitem{BRV22OmegaRegMemory}
Patricia Bouyer, Mickael Randour, and Pierre Vandenhove.
\newblock Characterizing omega-regularity through finite-memory determinacy of
  games on infinite graphs.
\newblock {\em CoRR}, abs/2110.01276, 2021.
\newblock \href {http://arxiv.org/abs/2110.01276} {\path{arXiv:2110.01276}}.

\bibitem{BRORV20FiniteMemory}
Patricia Bouyer, St{\'{e}}phane~Le Roux, Youssouf Oualhadj, Mickael Randour,
  and Pierre Vandenhove.
\newblock Games where you can play optimally with arena-independent finite
  memory.
\newblock In {\em CONCUR}, volume 171, pages 24:1--24:22, 2020.
\newblock \href {https://doi.org/10.4230/LIPIcs.CONCUR.2020.24}
  {\path{doi:10.4230/LIPIcs.CONCUR.2020.24}}.

\bibitem{Buchi77Games}
J.~Richard B{\"{u}}chi.
\newblock Using determinancy of games to eliminate quantifiers.
\newblock In {\em FCT}, volume~56 of {\em Lecture Notes in Computer Science},
  pages 367--378. Springer, 1977.
\newblock \href {https://doi.org/10.1007/3-540-08442-8\_104}
  {\path{doi:10.1007/3-540-08442-8\_104}}.

\bibitem{Casares2021Chromatic}
Antonio Casares.
\newblock On the minimisation of transition-based {R}abin automata and the
  chromatic memory requirements of {M}uller conditions.
\newblock In {\em CSL}, volume 216, pages 12:1--12:17, 2022.
\newblock \href {https://doi.org/10.4230/LIPIcs.CSL.2022.12}
  {\path{doi:10.4230/LIPIcs.CSL.2022.12}}.

\bibitem{CCF21Optimal}
Antonio Casares, Thomas Colcombet, and Nathana\"{e}l Fijalkow.
\newblock Optimal transformations of games and automata using {M}uller
  conditions.
\newblock In {\em ICALP}, volume 198, pages 123:1--123:14, 2021.
\newblock \href {https://doi.org/10.4230/LIPIcs.ICALP.2021.123}
  {\path{doi:10.4230/LIPIcs.ICALP.2021.123}}.

\bibitem{Colcombet2009CostFunctions}
Thomas Colcombet.
\newblock The theory of stabilisation monoids and regular cost functions.
\newblock In {\em ICALP}, pages 139--150, 2009.
\newblock \href {https://doi.org/10.1007/978-3-642-02930-1\_12}
  {\path{doi:10.1007/978-3-642-02930-1\_12}}.

\bibitem{ColcombetN2006PositionalEdge}
Thomas Colcombet and Damian Niwiński.
\newblock On the positional determinacy of edge-labeled games.
\newblock {\em Theoretical Computer Science}, 352(1):190--196, 2006.
\newblock \href {https://doi.org/https://doi.org/10.1016/j.tcs.2005.10.046}
  {\path{doi:https://doi.org/10.1016/j.tcs.2005.10.046}}.

\bibitem{Colcombetz2009tight}
Thomas Colcombet and Konrad Zdanowski.
\newblock A tight lower bound for determinization of transition labeled
  {B}{\"u}chi automata.
\newblock In {\em ICALP}, pages 151--162, 2009.
\newblock \href {https://doi.org/10.1007/978-3-642-02930-1\_13}
  {\path{doi:10.1007/978-3-642-02930-1\_13}}.

\bibitem{DJW1997memory}
Stefan Dziembowski, Marcin Jurdzi{ń}ski, and Igor Walukiewicz.
\newblock How much memory is needed to win infinite games?
\newblock In {\em LICS}, pages 99--110, 1997.
\newblock \href {https://doi.org/10.1109/LICS.1997.614939}
  {\path{doi:10.1109/LICS.1997.614939}}.

\bibitem{EmersonJutla91Determinacy}
E.~Allen Emerson and Charanjit~S. Jutla.
\newblock Tree automata, mu-calculus and determinacy (extended abstract).
\newblock In {\em FOCS}, pages 368--377, 1991.
\newblock \href {https://doi.org/10.1109/SFCS.1991.185392}
  {\path{doi:10.1109/SFCS.1991.185392}}.

\bibitem{GimbertZielonka2005Memory}
Hugo Gimbert and Wieslaw Zielonka.
\newblock Games where you can play optimally without any memory.
\newblock In {\em CONCUR}, volume 3653, pages 428--442, 2005.
\newblock \href {https://doi.org/10.1007/11539452\_33}
  {\path{doi:10.1007/11539452\_33}}.

\bibitem{GJLZ21}
Shibashis Guha, Isma\"{e}l Jecker, Karoliina Lehtinen, and Martin Zimmermann.
\newblock {A Bit of Nondeterminism Makes Pushdown Automata Expressive and
  Succinct}.
\newblock In {\em MFCS}, volume 202, pages 53:1--53:20, 2021.
\newblock \href {https://doi.org/10.4230/LIPIcs.MFCS.2021.53}
  {\path{doi:10.4230/LIPIcs.MFCS.2021.53}}.

\bibitem{Gurevich1982trees}
Yuri Gurevich and Leo Harrington.
\newblock Trees, automata, and games.
\newblock In {\em STOC}, pages 60--65, 1982.
\newblock \href {https://doi.org/10.1145/800070.802177}
  {\path{doi:10.1145/800070.802177}}.

\bibitem{HP06}
Thomas~A. Henzinger and Nir Piterman.
\newblock Solving games without determinization.
\newblock In {\em Computer Science Logic}, pages 395--410, 2006.

\bibitem{Horn09RandomFruits}
Florian Horn.
\newblock Random fruits on the zielonka tree.
\newblock In {\em STACS}, volume~3, pages 541--552, 2009.
\newblock \href {https://doi.org/10.4230/LIPIcs.STACS.2009.1848}
  {\path{doi:10.4230/LIPIcs.STACS.2009.1848}}.

\bibitem{Klarlund94Determinacy}
Nils Klarlund.
\newblock Progress measures, immediate determinacy, and a subset construction
  for tree automata.
\newblock {\em Annals of Pure and Applied Logic}, 69(2):243--268, 1994.
\newblock \href {https://doi.org/https://doi.org/10.1016/0168-0072(94)90086-8}
  {\path{doi:https://doi.org/10.1016/0168-0072(94)90086-8}}.

\bibitem{Kopczynski2006Half}
Eryk Kopczy{\'n}ski.
\newblock Half-positional determinacy of infinite games.
\newblock In {\em ICALP}, pages 336--347, 2006.
\newblock \href {https://doi.org/10.1007/11787006\_29}
  {\path{doi:10.1007/11787006\_29}}.

\bibitem{Kopczynski2008PhD}
Eryk Kopczy{\'n}ski.
\newblock Half-positional determinacy of infite games. {P}h{D} {T}hesis.
\newblock 2008.

\bibitem{Kozachinskiy22Chromatic}
Alexander Kozachinskiy.
\newblock State complexity of chromatic memory in infinite-duration games.
\newblock {\em CoRR}, abs/2201.09297, 2022.
\newblock \href {http://arxiv.org/abs/2201.09297} {\path{arXiv:2201.09297}}.

\bibitem{Kretinski2017IAR}
Jan K{\v{r}}et{\'i}nsk{\'y}, Tobias Meggendorfer, Clara Waldmann, and
  Maximilian Weininger.
\newblock Index appearance record for transforming {R}abin automata into parity
  automata.
\newblock In {\em TACAS}, pages 443--460, 2017.
\newblock \href {https://doi.org/10.1007/978-3-662-54577-5\_26}
  {\path{doi:10.1007/978-3-662-54577-5\_26}}.

\bibitem{KS15}
Denis Kuperberg and Michał Skrzypczak.
\newblock On determinisation of good-for-games automata.
\newblock In {\em ICALP}, pages 299--310, 2015.
\newblock \href {https://doi.org/10.1007/978-3-662-47666-6_24}
  {\path{doi:10.1007/978-3-662-47666-6_24}}.

\bibitem{LZ20}
Karoliina Lehtinen and Martin Zimmermann.
\newblock Good-for-games $\omega$-pushdown automata.
\newblock In {\em LICS}, page 689–702, 2020.
\newblock \href {https://doi.org/10.1145/3373718.3394737}
  {\path{doi:10.1145/3373718.3394737}}.

\bibitem{LodingP19}
Christof L{\"{o}}ding and Anton Pirogov.
\newblock Determinization of {B}{\"{u}}chi automata: Unifying the approaches of
  {S}afra and {M}uller-{S}chupp.
\newblock In {\em ICALP}, pages 120:1--120:13, 2019.
\newblock \href {https://doi.org/10.4230/LIPIcs.ICALP.2019.120}
  {\path{doi:10.4230/LIPIcs.ICALP.2019.120}}.

\bibitem{LMS20SynthesisLTL}
Michael Luttenberger, Philipp~J. Meyer, and Salomon Sickert.
\newblock Practical synthesis of reactive systems from {LTL} specifications via
  parity games.
\newblock {\em Acta Informatica}, pages 3--36, 2020.
\newblock \href {https://doi.org/10.1007/s00236-019-00349-3}
  {\path{doi:10.1007/s00236-019-00349-3}}.

\bibitem{Loding1999Optimal}
Christof Löding.
\newblock Optimal bounds for transformations of $\omega$-automata.
\newblock In {\em FSTTCS}, page 97–109, 1999.
\newblock \href {https://doi.org/10.1007/3-540-46691-6\_8}
  {\path{doi:10.1007/3-540-46691-6\_8}}.

\bibitem{McNaughton1966Testing}
Robert McNaughton.
\newblock Testing and generating infinite sequences by a finite automaton.
\newblock {\em Information and control}, 9:521--530, 1966.

\bibitem{MeyerSickert21OptimalPractical}
Philipp Meyer and Salomon Sickert.
\newblock On the optimal and practical conversion of {E}merson-{L}ei automata
  into parity automata.
\newblock {\em Personal Communication}, 2021.

\bibitem{Michel1988Complementation}
Max Michel.
\newblock Complementation is more difficult with automata on infinite words.
\newblock {\em CNET, Paris}, 15, 1988.

\bibitem{MubayiRodl2014Intersections}
Dhruv Mubayi and Vojtech R{ö}dl.
\newblock Specified intersections.
\newblock {\em Transactions of the American Mathematical Society},
  366(1):491--504, 2014.
\newblock URL: \url{http://www.jstor.org/stable/23813142}.

\bibitem{MullerSchupp95NewResults}
David~E. Muller and Paul~E. Schupp.
\newblock Simulating alternating tree automata by nondeterministic automata:
  New results and new proofs of the theorems of {R}abin, {M}c{N}aughton and
  {S}afra.
\newblock {\em Theor. Comput. Sci.}, 141(1–2):69–107, 1995.
\newblock \href {https://doi.org/10.1016/0304-3975(94)00214-4}
  {\path{doi:10.1016/0304-3975(94)00214-4}}.

\bibitem{Piterman2006fromNDBuchi}
Nir Piterman.
\newblock From nondeterministic {B}{\"u}chi and {S}treett automata to
  deterministic parity automata.
\newblock In {\em LICS}, pages 255--264, 2006.
\newblock \href {https://doi.org/10.1109/LICS.2006.28}
  {\path{doi:10.1109/LICS.2006.28}}.

\bibitem{PR89Synthesis}
Amir Pnueli and Roni Rosner.
\newblock On the synthesis of a reactive module.
\newblock In {\em POPL}, page 179–190, 1989.
\newblock \href {https://doi.org/10.1145/75277.75293}
  {\path{doi:10.1145/75277.75293}}.

\bibitem{Safra1988onthecomplexity}
Schmuel Safra.
\newblock On the complexity of $\omega$-automata.
\newblock In {\em FOCS}, page 319–327, 1988.
\newblock \href {https://doi.org/10.1109/SFCS.1988.21948}
  {\path{doi:10.1109/SFCS.1988.21948}}.

\bibitem{Schewe2009tighter}
Sven Schewe.
\newblock Tighter bounds for the determinisation of {B}{\"u}chi automata.
\newblock In {\em FoSSaCS}, pages 167--181, 2009.
\newblock \href {https://doi.org/10.1007/978-3-642-00596-1\_13}
  {\path{doi:10.1007/978-3-642-00596-1\_13}}.

\bibitem{Schewe10MinimisingNPComplete}
Sven Schewe.
\newblock Beyond hyper-minimisation---minimising {DBA}s and {DPA}s is
  {NP}-complete.
\newblock In {\em FSTTCS}, volume~8, pages 400--411, 2010.
\newblock \href {https://doi.org/10.4230/LIPIcs.FSTTCS.2010.400}
  {\path{doi:10.4230/LIPIcs.FSTTCS.2010.400}}.

\bibitem{Schewe20MinimisingGFG}
Sven Schewe.
\newblock Minimising {G}ood-{F}or-{G}ames automata is {NP}-complete.
\newblock In {\em FSTTCS}, volume 182, pages 56:1--56:13, 2020.
\newblock \href {https://doi.org/10.4230/LIPIcs.FSTTCS.2020.56}
  {\path{doi:10.4230/LIPIcs.FSTTCS.2020.56}}.

\bibitem{Zielonka1998infinite}
Wies{\l}aw Zielonka.
\newblock Infinite games on finitely coloured graphs with applications to
  automata on infinite trees.
\newblock {\em Theoretical Computer Science}, 200(1-2):135--183, 1998.
\newblock \href {https://doi.org/10.1016/S0304-3975(98)00009-7}
  {\path{doi:10.1016/S0304-3975(98)00009-7}}.

\end{thebibliography}
	
\end{document}